    \tikzstyle{vertex}=[circle,fill=black!25,minimum size=20pt,inner sep=0pt]
    \tikzstyle{edge} = [draw,thick,->]
	\newcommand{\citet}[1]{\citeauthor{#1}~\shortcite{#1}}
	\newcommand{\citep}{\cite}
\newtheorem{theorem}{Theorem}%
\newtheorem*{theorem*}{Theorem}
\newtheorem{lemma}{Lemma}%
\newtheorem{corollary}{Corollary}%
\newtheorem{example}{Example}%
\newcommand{\ISG}{ISG\xspace}
\newcommand{\ISGs}{ISGs\xspace}
\newcommand{\task}{service\xspace}
\newcommand{\tasks}{services\xspace}
\newcommand{\Tasks}{Services\xspace}
\title{Interdependent Scheduling Games}
\author{
Andres Abeliuk \\ Data61/NICTA \\ andres.abeliuk@data61.csiro.au \And
Haris Aziz \\ Data61/NICTA and UNSW \\ haris.aziz@data61.csiro.au \And
Gerardo Berbeglia \\ University of Melbourne \\  g.berbeglia@mbs.edu
\AND
Serge Gaspers \\ UNSW and Data61/NICTA \\ sergeg@cse.unsw.edu.au \And
Petr Kalina \\ Czech Technical University \\ petr.kalina@fel.cvut.cz \And
Nicholas Mattei \\ Data61/NICTA and UNSW \\ nicholas.mattei@data61.csiro.au
\AND
Dominik Peters \\ University of Oxford \\  dominik.peters@cs.ox.ac.uk \And
Paul Stursberg \\ Technische Universit{\"a}t M{\"u}nchen \\ paul.stursberg@ma.tum.de
\AND
Pascal Van Hentenryck \\ University of Michigan \\ pvanhent@umich.edu \And
Toby Walsh \\ UNSW and Data61/NICTA \\ tw@cse.unsw.edu.au
}
\begin{document}
    
\maketitle

\begin{abstract}
We propose a model of interdependent scheduling games
in which each player controls a set of \tasks that they schedule independently.
A player is free to schedule his own \tasks at any time; however, 
each of these \tasks only begins to accrue reward for the player when all predecessor \tasks,
which may or may not be controlled by the same player, have
been activated.  This model, where players have \emph{interdependent}
\tasks, is motivated by the problems faced in planning and coordinating 
large-scale infrastructures, e.g., restoring electricity and gas to residents
after a natural disaster or providing medical care in a crisis when
different agencies are responsible for the delivery of staff, equipment, and
medicine. We undertake a game-theoretic analysis of this setting and in particular consider the issues of welfare maximization, computing best responses, Nash dynamics, and existence and computation of Nash equilibria.

\end{abstract}

\section{Introduction}

Restoring critical infrastructure in the aftermath of natural disasters or extreme weather events where water, power, and gas services may all be interrupted is one of the most important ways of limiting the impact of the disaster on society.  Our motivation for this work is drawn from situations where companies and governments need to restore \emph{interdependent infrastructure} after major disruptions due to disasters and other forces. For instance, the electric company may be able to restore power lines to individual homes, but no electricity will flow until the gas
company can supply gas to the main generator. Once the power
is flowing, the electric company receives its reward (income) from those customers
receiving power. In order to pump water, power needs to have been restored and the water lines need to be repaired. Each of these objectives are typically broken down into smaller tasks that restore availability to a subset of customers. In these settings, multiple agents (also called players) are responsible for different services and may have conflicting interests: the power company may deploy its services in an order that maximizes reach to its subscriber base first, as opposed to undertaking repairs that allow another company to restart the water pumps.
This paper formalizes a novel abstract model of this setting and studies the problem of finding a \emph{joint deployment schedule} of services through a game theoretic lens, as players in this setting are independent decision makers. We consider classic questions such as welfare maximization, best responses, and the existence and computation of Nash equilibria.

From the community's perspective, the overall goal is to reduce the size and length of the blackout.  Indeed, governments in the US plan for infrastructure restoration at a higher level than the individual company, e.g., the state government or regional emergency management planning.  However, when disasters become too large or individual companies refuse to cooperate with regional disaster management plans then
companies might be unable (or unwilling) to obey global welfare considerations in restoring their infrastructure.
\citet{cavdaroglu2013integrating} and \citet{coffrin2012last} provide models that integrate the restoration planning and scheduling decisions to show that there is significant value in this integration as opposed to tackling both problems in a decentralized manner. 
Our model of \emph{interdependent scheduling games} (\ISGs) is a step towards understanding the impact of decentralized decision making in settings with interdependencies.
Other examples of \ISGs include coordinating multiple providers for humanitarian logistics over multiple regions, where roads need to be repaired before supplies can be delivered and tents must be erected before supplies can be distributed, or the coordination of interdependent supply chains which may involve ports, terminals, railway, and truck operators \cite{van2010strategic,simon2012randomized}.

In our formalization, we consider a set of players, each of which has a set of \tasks under their control that need to be deployed.
The individual players' \tasks may have dependencies among each other and, crucially, may also be dependent on the status of other players' \tasks.
In contrast to most traditional scheduling settings, where a task cannot be scheduled unless all of its dependencies have been fulfilled, \tasks in our setting can be deployed at any time, even before its dependencies have been deployed.
However, a player only starts \emph{accruing reward} for a \task $v$ once all of its dependencies have been deployed as well.
At this point, we say that $v$ has been \emph{activated} and the player continues to gather reward for every time step in which the \task is active.
A typical reward in our setting would be collecting fees from utility subscribers who have had their service restored.  

\textbf{Contributions.}\;
We present a scheduling model with dependencies among \tasks that is suitable for scenarios in power restoration after natural disasters.
We show that when there is only a single player, a welfare-maximizing schedule can be found in polynomial time. For more players, welfare maximization becomes NP-complete even with just two \tasks per player.
Regarding game-theoretic solution concepts, we prove that in general, pure Nash equilibria are not guaranteed to exist, and that it is NP-hard to decide their existence.
On the positive side, we consider a restricted setting where all \tasks have uniform (equal) reward and prove that a pure Nash equilibrium always exists and can be computed in polynomial time. Similarly, best responses can be computed efficiently but they need not converge to a Nash equilibrium, even if rewards are uniform. For the uniform rewards case, we also give bounds for the price of anarchy and the price of stability.
Further, we provide an ILP formulation of the problem and demonstrate that, for generated data, we can find welfare maximizing schedules quickly.

\section{Related Work}
The problem of finding a schedule of tasks that maximizes the reward is an important question in scheduling, a classic area of computer science with many practical and important problems. Most classical scheduling problems focus on allocating scarce resources to multiple tasks in order to maximize an objective function or minimize total time \cite{brucker2007scheduling,lee1997current}.
In contrast to most of the scheduling literature, the dependencies (or precedence constraints) between the \tasks in our model do not prevent the player from scheduling a \task before its prerequisites are fulfilled. Instead, they keep the player from receiving \emph{reward} from the \task until the prerequisites are fulfilled.

Encouraging distributed agents, each of which may be responsible for only a small piece of a larger task, to work together to solve complex problems has a rich history in artificial intelligence and multi-agent systems research. Scheduling distributed tasks in domains where agents are imbued with their own reward functions but are ultimately cooperative as they can jointly benefit from finding coordinated schedules, has been studied in a probabilistic setting by \citet{ZhangS14a}. Additionally, \emph{task oriented domains} \cite{rosenschein1994rules}, which typically involve multiple agents working together cooperatively, are a popular framework for investigating mechanisms and properties of multi-agent domains where agents either need to work together or negotiate over work to be accomplished. \citet{ZlotkinR93} formalize the notion of strategic behavior when agents negotiate in task oriented domains. They provide a characterization of the type of lies (e.g. hiding jobs) and reward functions that admit incentive compatible mechanisms for a number of classic domains, though none of these classic domains involve scheduling with dependencies.

We focus our analysis on game-theoretic issues such as best response dynamics and Nash equilibria that are keenly applicable in settings such as ours where agents, trying to maximize independent utility, may or may not have explicit incentives to cooperate towards maximizing global welfare.
Scheduling domains in which players compete for common processing resources were introduced by \citeauthor{agnetis2000nondominated}~\shortcite{agnetis2000nondominated,agnetis2004scheduling} and \citet{baker2003multiple}. The most traditional approach in multi-agent scheduling is to consider a single centralized authority optimizing the whole domain.
There have been a number of recent works focused on decentralized scheduling mechanisms.  \citet{agnetis2007combinatorial} consider auction and bargaining models, which are useful when several players have to negotiate for processing resources on the basis of their scheduling performance. Scheduling auctions typically divide the schedule horizon into time slots, and these time slots are auctioned among the players. The bargaining approach considers two players that have to negotiate over possible schedules. \citet{abeliuk2015Bargaining} consider a two-player bargaining mechanism for any setting where the reward of one player does not depend on the actions taken by the other. Their results hence apply to special instances of \ISGs with two players.
For additional literature on mechanism design for non-cooperative scheduling games see, e.g, \citet{heydenreich2007games}, \citet{christodoulou2004coordination}, and \citet{angel2006truthful}.

Another related line of research is multi-agent project-scheduling. Here, each project is composed of a set of activities, with precedence relations between the activities, and each activity belongs to an agent.  Each activity is associated with a minimum and a maximum processing time and agents have to choose a duration for all their activities. 
Compressing the duration of an activity generates a cost to the agent, and an agents' payoff  
is a fixed proportion of the total project payment, which depends on the project completion time.
 A mechanism design approach for multi-agent project-scheduling by \citet{confessore2007market} proposes a decentralized mechanism using combinatorial auctions. Recently, \citet{briand2011cooperative} took a first step in analyzing game theoretical properties such as the existence and computation of Nash equilibria as well as studying the price of anarchy in this setting. However, their setting significantly differs from that considered in this paper in that activities of the same agent can be processed in parallel and that all agents receive some fraction of the reward of a common production process. In contrast, we focus on agents involved in independent projects with separate objective functions, only related by precedence constraints between each other.
	
\section{Our Model}
 \label{setup}

A \emph{directed graph} $G$ is a pair $(V,E)$ of a finite set of \emph{vertices} $V$ and a set of directed edges $E\subseteq V\times V$ where $(u,v) \in E$ means that there is a directed edge from $u$ to $v$ in $G$. We will always assume that $G$ is \emph{acyclic}, i.e., there is no set of edges $\{(v_1,v_2),(v_2,v_3),\dots,(v_n,v_1)\} \subseteq E$.
	We say that $G$ is \emph{transitive} if $(u,v),(v,w)\in E$ implies that $(u,w)\in E$.
The \emph{transitive closure} of a graph $G = (V,E)$ is a graph $G' = (V, E')$ such that $(u, v) \in E'$ if a directed path connects $u$ and $v$ in $G$.
The \emph{in-neighborhood} of a vertex $v$  is the set of vertices with edges to $v$ and is denoted by $N_G^-(v) = \{u\in V : (u,v)\in E\}$.

An \emph{interdependent scheduling game} (ISG) with $k$ players is given by a tuple $((T_1,\dots,T_k),G,r)$.
Each player $i$ needs to schedule a set of \emph{\tasks} $T_i$, where the $T_i$ are pairwise disjoint.
We denote the set of all \tasks by $T = \bigcup_{i=1}^k T_i$.
We assume without loss of generality that $|T_1|=\dots=|T_k| = q$. 
Within $T$ there are dependencies: a \task will not activate until it and all its prerequisites are deployed. We formalize this \emph{dependency relation} as a transitive acyclic directed graph $G=(T,E)$. If $(u,v) \in E$, then \task $v$ will generate a reward only after \task $u$ has been deployed.To be precise, at each time step $t$, each player \emph{deploys} exactly one \task. In particular, we assume that every \task takes exactly one unit of time to deploy. A \task which takes longer to deploy can be represented as a series of \tasks depending on each other where only the final \task generates a reward.
For each \task $v\in T$, there is a \emph{reward} $r(v)\geq 0$, representing payment received or subscribers served in each time period that the \task is active.
We will sometimes consider the more restrictive case of \emph{uniform rewards} where for all $v \in T, r(v) = 1$.

A solution for an \ISG is a schedule of all \tasks in $T$. As rewards are non-negative, players do not have an incentive to leave a gap between the deployment of two \tasks. We can hence represent a \emph{schedule} by a tuple $\pi = (\pi_1,\dots,\pi_k)$, where each $\pi_i:T_i \rightarrow \{1,\dots,|T_i|\}$ is a permutation of the \tasks $T_i$ of player $i$. This permutation uniquely determines the schedule for player $i$ and the position of a \task in the permutation denotes the time when it is deployed.

A \task $v$ is \emph{active} during a time step if itself and all \tasks in $N_G^-[v]$ are deployed at or before that time step. We denote by $a(v)$ the time when $v$ becomes active, i.e.
$ a(v)=\max \{\pi(w): w \in \{v\}\cup N_G^-[v]\}$.
At each time step, all active \tasks $v$ generate the reward $r(v)$.
Thus, for a schedule $\pi=(\pi_1,\dots,\pi_k)$, the \emph{utility} of player $i$ is
$ R_i(\pi) = \sum_{t=1}^{q} \;\sum_{v \in T_i, t \ge a(v)} r(v)$.
The \emph{utilitarian social welfare} (or just \emph{welfare}) of a schedule $\pi$ is $\smash{\sum_{i=1}^k} R_i(\pi)$.

We graphically represent an \ISG in Example~\ref{ex:basic}. 
Player $i$'s \tasks $T_i$ form the nodes shown in the $i$th row.  The \tasks in a row, from left to right, represent player $i$'s schedule, while the label of a \task indicates its reward. For ease of presentation, we omit arrows that are implied by transitivity of the dependency relation; the full dependency graph is the transitive closure of the depicted graph.  This representation is not completely unambiguous as a \task $v$ is identified only by $r(v)$ and the edges in $N_{G}(v)$. However, while indistinguishable (subsets of) tasks may exist, these can be interchanged within any particular outcome without effect.

\begin{example}
Consider the following example.  

\begin{minipage}{0.47\linewidth}
\centering
	\scalebox{0.9}{
	\begin{tikzpicture}[]
	\node (p1) {$\pi_1:$};
	\node[vertex] (t_1_a) [right of=p1] {$10$};
	\node[vertex] (t_2_a) [right of=t_1_a] {$1$};
	\node[vertex] (t_3_a) [right of=t_2_a] {$1$};

	\node (p2) [below of=p1] {$\pi_2:$};
	\node[vertex] (t_1_b) [right of=p2] {$1$};
	\node[vertex] (t_2_b) [right of=t_1_b] {$100$};
	\node[vertex] (t_3_b) [right of=t_2_b] {$100$};

	\path[edge] (t_2_a) to [bend left=10] (t_3_a);

	\path[edge] (t_1_a) to [bend left=10] (t_1_b);
	\path[edge] (t_2_a) to [bend left=10] (t_2_b);
	\path[edge] (t_2_a) to [bend left=10] (t_3_b);
	\end{tikzpicture}
	}
\end{minipage}
\hfill
\begin{minipage}{0.47\linewidth}
\centering
	\scalebox{0.9}{
	\begin{tikzpicture}[]
	\node (p1) {$\pi'_1:$};
	\node[vertex] (t_1_a) [right of=p1] {$1$};
	\node[vertex] (t_2_a) [right of=t_1_a] {$1$};
	\node[vertex] (t_3_a) [right of=t_2_a] {$10$};

	\node (p2) [below of=p1] {$\pi'_2:$};
	\node[vertex] (t_1_b) [right of=p2] {$100$};
	\node[vertex] (t_2_b) [right of=t_1_b] {$100$};
	\node[vertex] (t_3_b) [right of=t_2_b] {$1$};

	\path[edge] (t_1_a) to [bend left=10] (t_2_a);

	\path[edge] (t_1_a) to [bend left=10] (t_1_b);
	\path[edge] (t_1_a) to [bend left=10] (t_2_b);
	\path[edge] (t_3_a) to [bend left=10] (t_3_b);
	\end{tikzpicture}
	}
\end{minipage}

\smallskip
\noindent
Both of the \tasks with reward $100$ belong to Player 2 but depend upon a \task belonging to Player 1. For schedule $\pi$, $ R_1(\pi) = 3 \cdot 10+2 \cdot 1+1=33$ as the \task with reward 10 is active for three time steps and the other \tasks are active for two and one time step, respectively. Similarly, $R_2(\pi) = 3\cdot 1+ 2\cdot 100+100=303$. For $\pi'$, $ R_1(\pi') =3\cdot 1+2\cdot 1+10=15$ while $ R_2(\pi') = 3\cdot 100+ 2\cdot 100+1=501$.  Hence, Player 1 can sacrifice some individual reward to increase welfare.
\label{ex:basic}
\end{example}

\section{Best Responses}
If all other players' actions are fixed, the resulting problem for an individual player is that of finding a best response.  Let $R_i(\pi_{-i}, \pi_i')$ be the reward for player $i$ for the schedule $(\pi_1,\dots, \pi_{i-1},\pi'_i,\pi_{i+1},\dots,\pi_k)$.

\vspace{.1cm}
\noindent
\textbf{Problem}: \textsc{ISG Best Response}. \\
\noindent
\textbf{Instance:} An \ISG $((T_1,\dots,T_k),G,r)$, a schedule $\pi_{-i}$ for all players $\{1,\dots,k\} \setminus \{i\}$, and an integer $W$.\\
\noindent
\textbf{Question:} Is there a $\pi_i'$ such that $R_i(\pi_{-i}, \pi_i')\geq  W$?
\vspace{.1cm}

Assuming that players are individually rational they will favor schedules that maximize their own reward, i.e., their own subscriber base or service network. Hence an individual player will always favor a schedule such that every \task $v$ that he controls is deployed only after all other \tasks under the player's control that $v$ depends on have been deployed.
Formally, the following Lemma holds:

\begin{lemma}\label{lem:conflict-free}
	Let $((T_1,\dots,T_k),G,r)$ be an ISG with general rewards and $\pi_{-i}$ a schedule for all players except player $i$. Let $G_i=(T_i,E_i)$ denote the subgraph of $G$ induced by the vertices in $T_i$. Then, there exists a best response $\pi_i$ for player $i$ such that
	\begin{equation}
		\pi_i(u)<\pi_i(v) \text{ for all } (u,v) \in E_i. \label{eq:conflict-free}
	\end{equation}
\end{lemma}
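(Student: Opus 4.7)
The plan is a constructive exchange argument. Given any best response $\pi_i^*$, I would build a topological order $\pi_i$ of $G_i$ whose reward is at least as high, so that $\pi_i$ is itself a best response obeying (\ref{eq:conflict-free}). Writing $R_i(\pi) = \sum_{v \in T_i} (q - a(v) + 1)\, r(v)$ and noting that rewards are nonnegative, it suffices to produce $\pi_i$ with $a_{\pi_i}(v) \le a_{\pi_i^*}(v)$ for every $v \in T_i$, where $a_{\sigma}$ denotes activation times under the schedule $(\pi_{-i},\sigma)$. I would construct $\pi_i$ by a stable topological sort with $a_{\pi_i^*}$-values as tiebreaker: for $t = 1,\dots,q$, set $\pi_i^{-1}(t)$ to be an unscheduled task in $T_i$ all of whose $G_i$-predecessors are already scheduled, minimizing $a_{\pi_i^*}(\cdot)$. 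Such a task exists because $G_i$ is acyclic.

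Once $\pi_i$ is topological, every $T_i$-predecessor of $v$ lies earlier than $v$ in $\pi_i$, so $a_{\pi_i}(v) = \max(\pi_i(v), m(v))$ where $m(v)$ is the latest deployment time among $v$'s non-$T_i$ predecessors. Since those deployment times are fixed by $\pi_{-i}$, $m(v)$ is the same under $\pi_i$ and $\pi_i^*$, and $m(v) \le a_{\pi_i^*}(v)$ holds automatically because $a_{\pi_i^*}(v)$ is already a maximum over those same vertices. The whole argument therefore reduces to the inequality $\pi_i(v) \le a_{\pi_i^*}(v)$.

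The crux, and the one place genuine care is needed, is this inequality. I would handle it by analyzing the set $D = \{u \in T_i : a_{\pi_i^*}(u) \le a_{\pi_i^*}(v)\}$, for which two properties suffice. First, $|D| \le a_{\pi_i^*}(v)$, because $\pi_i^*(u) \le a_{\pi_i^*}(u) \le a_{\pi_i^*}(v)$ for every $u \in D$, placing $D$ inside the $\pi_i^*$-preimage of $\{1,\dots,a_{\pi_i^*}(v)\}$. Second, $D$ is closed under taking $G_i$-predecessors: if $u \in D$ and $(u',u) \in E_i$, then $\pi_i^*(u') \le a_{\pi_i^*}(u) \le a_{\pi_i^*}(v)$, and the \emph{transitivity of $G$} (the essential ingredient) implies that every $G$-predecessor $w$ of $u'$ is also a $G$-predecessor of $u$, so $\pi(w) \le a_{\pi_i^*}(u) \le a_{\pi_i^*}(v)$, which together give $a_{\pi_i^*}(u') \le a_{\pi_i^*}(v)$. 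With both properties in hand, at every greedy step $t \le |D|$ the not-yet-scheduled part of $D$ is a nonempty sub-DAG of $G_i$, so it contains a source whose $G_i$-predecessors are all already scheduled and thus is available to the greedy. Since tasks in $D$ have $a_{\pi_i^*}$-value at most $a_{\pi_i^*}(v)$ while tasks outside $D$ strictly exceed it, the greedy must select from $D$; therefore the whole of $D$, and in particular $v$, is placed within the first $|D| \le a_{\pi_i^*}(v)$ positions, giving $\pi_i(v) \le a_{\pi_i^*}(v)$ and completing the proof.
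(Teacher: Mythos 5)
Your proof is correct, but it takes a genuinely different route from the paper's. The paper argues by minimal counterexample and exchange: among all best responses it picks one minimizing the number of tasks in $T_i$ that wait for a later-scheduled $T_i$-predecessor, and, if that number is positive, it performs a block shift (moving the offending task $v$ to the slot of its latest $T_i$-predecessor $u$ and shifting the intermediate tasks forward), showing the reward is unchanged while the violation count strictly drops -- transitivity is used to rule out newly created violations. You instead give a direct construction: a greedy topological sort of $G_i$ that breaks ties by the activation times $a_{\pi_i^*}(\cdot)$ of the given best response, and you prove pointwise domination $a_{\pi_i}(v)\le a_{\pi_i^*}(v)$ via the down-closed set $D=\{u\in T_i: a_{\pi_i^*}(u)\le a_{\pi_i^*}(v)\}$, whose closure under $G_i$-predecessors again rests on transitivity of $G$. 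Your argument is sound (the counting step $|D|\le a_{\pi_i^*}(v)$, the closure step, and the claim that the greedy exhausts $D$ in its first $|D|$ steps all check out, and pointwise domination of activation times suffices since $R_i(\pi)=\sum_{v\in T_i}(q-a(v)+1)\,r(v)$ with $r\ge 0$) and in fact proves something slightly stronger than the paper does: starting from \emph{any} schedule for player $i$, not just a best response, one obtains in polynomial time a schedule satisfying \eqref{eq:conflict-free} whose activation times are pointwise no later, which also foreshadows the greedy best-response algorithm of Theorem~\ref{th:br-uniform}. What the paper's exchange argument buys in return is brevity: it needs no auxiliary set $D$ and no per-vertex domination claim, at the cost of being purely existential.
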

\begin{proof}
	Let $\sigma(\pi_i) := |\{v \in T_i: \exists (u,v) \in E_i \text{ with } \pi_i(u)>\pi_i(v)\}|$ denote the number of \tasks in $T_i$  that depend on another \task in $T_i$ which is scheduled later. Let $\pi'_i$ denote a best response of player $i$ such that $\sigma(\pi'_i)$ is minimal among all best responses. We suppose for contradiction that the statement is false, therefore $\sigma(\pi'_i) \geq 1$. Choose $(u,v) \in E_i$ with $\pi'_i(u)>\pi'_i(v)$ in such a way that there is no $u'$ with $(u',v) \in E_i$ and $\pi'_i(u') > \pi'_i(u)$. Consider the following modified schedule $\pi^*_i$ for player $i$:
	\begin{equation*}
		\pi^*_i(w) :=\begin{cases}
			\pi'_i(w)-1	& \pi'_i(w) \in [\pi'_i(v) + 1, \pi'_i(u)]\\
			\pi'_i(u)	& w=v\\
			\pi'_i(w)	& \text{else}.
		\end{cases}
	\end{equation*}

The following two properties hold:
\textbf{(i)}:\;The schedule $\pi^*_i$ is also a best response. The only \task that is scheduled to a later time in $\pi^*_i$ (and hence could cause itself or \tasks depending on it to generate a smaller reward) is $v$. However, $v$ did not activate before time step $\pi'_i(u)$ under $\pi'$ and as $\pi^*_i(v)=\pi'_i(u)$ the reward generated by $v$ does not change. The same holds for all \tasks that depend on $v$.
\textbf{(ii)}:\; $\sigma(\pi^*_i) < \sigma(\pi'_i)$. First, note that $v$ does not contribute towards $\sigma$ anymore as $\pi^*_i(v) > \pi^*_i(u)$ (the same holds for all other \tasks that $v$ depends upon by the maximality of $u$). Now, consider any \task $w$ that did not contribute to $\sigma(\pi'_i)$. As the ordering among all \tasks except $v$ remains the same, such a $w$ can only contribute to $\sigma(\pi^*_i)$ if it depends on $v$ and $\pi'_i(v) < \pi'_i(w) < \pi^*_i(v) = \pi'_i(u)$. But then, it must also depend on $u$ by transitivity and hence it contributed to $\sigma(\pi'_i)$ already.
	From (i) and (ii), we obtain a contradiction to minimality of $\sigma(\pi'_i)$, concluding the proof.
\end{proof}

Note that performing pairwise swaps in a player's scheduled services is not sufficient in the context of the above proof as this may introduce new forward edges. The above lemma holds for general rewards. If rewards are uniform, we can use Lemma~\ref{lem:conflict-free} to derive a polynomial-time algorithm for an individual player's best response to all other players' schedules. 

\begin{theorem}\label{th:br-uniform}
For an \ISG with uniform rewards, there exists a polynomial-time algorithm to compute a best response.
\end{theorem}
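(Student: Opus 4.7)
The plan is to reduce the single-player best-response problem with uniform rewards to a one-machine scheduling problem and solve it by a priority topological sort. First, I would apply Lemma~\ref{lem:conflict-free} to restrict attention to schedules $\pi_i$ in which $\pi_i(u) < \pi_i(v)$ holds for every $(u,v) \in E_i$. Under such a schedule, every $T_i$-ancestor of a task $v \in T_i$ is scheduled strictly before $v$, so the activation time simplifies to $a(v) = \max(\pi_i(v), d(v))$, where $d(v) := \max\{\pi_j(u) : j \neq i,\ u \in T_j \cap N_G^-(v)\}$ (and $0$ if the set is empty) is a quantity entirely determined by the opponents' fixed schedule $\pi_{-i}$.

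With uniform rewards $r \equiv 1$, the utility becomes $R_i(\pi) = \sum_{v \in T_i}(q+1-a(v))$, so finding a best response is equivalent to computing a topological ordering $\pi_i$ of $(T_i, E_i)$ that minimizes $\sum_{v \in T_i} \max(\pi_i(v), d(v))$. The key structural observation is that $d$ is monotone along $E_i$: by transitivity of $G$, if $(u,v) \in E_i$ then every external ancestor of $u$ is also an external ancestor of $v$, so $d(u) \leq d(v)$. Consequently there is always a topological order of $(T_i, E_i)$ along which $d$ is non-decreasing, and such an order can be produced in polynomial time by a Kahn-style topological sort using $d(\cdot)$ as the priority (ties broken arbitrarily).

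To prove that any such non-decreasing-$d$ topological order is optimal, I would use a standard adjacent-swap exchange argument. Given any topological order $\pi_i$ with two tasks $u,v$ at positions $t, t+1$ where $d(u) > d(v)$, the monotonicity of $d$ forces $(u,v) \notin E_i$, hence the swap is feasible. A short case split on whether each of $d(u), d(v)$ is $\le t$, equals $t+1$, or $> t+1$ shows that $\max(t, d(u)) + \max(t+1, d(v)) \geq \max(t, d(v)) + \max(t+1, d(u))$, with strict inequality precisely when $d(v) \leq t < d(u)$. Iterating these swaps in bubble-sort fashion turns any optimum into one with non-decreasing $d$ along the schedule without ever increasing the objective.

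The delicate step is not the exchange calculation itself but its interaction with the precedence constraints: the monotonicity of $d$ along $E_i$ is precisely what guarantees that every out-of-order adjacent pair is actually swappable, and this is where the assumption that $G$ is transitive does real work. Once this is in place, the whole algorithm — compute $d$ from $\pi_{-i}$ by a single pass over $G$, then run a priority topological sort on $(T_i, E_i)$ — runs in time polynomial in $|T|$ and $|E|$, establishing the theorem.
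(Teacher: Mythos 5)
Your proposal is correct, and it implements the same algorithm as the paper: compute, for each $v \in T_i$, the externally imposed lower bound ($d(v)$, called $\eta(v)$ in the paper), observe that transitivity of $G$ makes this bound monotone along $E_i$, and then greedily schedule available tasks in order of increasing bound; Lemma~\ref{lem:conflict-free} is invoked in both arguments to restrict to topological orders of $(T_i,E_i)$. Where you genuinely diverge is in the optimality proof. The paper compares the greedy output against an optimal conflict-free schedule agreeing with it on a longest prefix and repairs the first disagreement via a three-case surgery on the optimal schedule; you instead make the objective explicit, $R_i = \mathrm{const} - \sum_{v\in T_i}\max(\pi_i(v),d(v))$, and run a bubble-sort exchange argument: any adjacent pair inverted with respect to $d$ is swappable (monotonicity of $d$ along $E_i$ rules out the edge $(u,v)$, and $(v,u)$ is excluded by the order being topological), and the swap never increases the cost, with your strictness condition $d(v)\le t<d(u)$ checking out. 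This buys a shorter and more transparent correctness argument; the one step you leave implicit is the bridge from ``some optimum is a non-decreasing-$d$ topological order'' to ``the order the algorithm outputs is optimal,'' which needs the (immediate) remark that all non-decreasing-$d$ orders place the same multiset value $d_{(t)}$ in slot $t$ and hence have equal cost, or alternatively a short argument that the priority-based Kahn sort indeed emits tasks in non-decreasing $d$ (which again follows from monotonicity of $d$ along $E_i$). With that sentence added, your proof is complete and, if anything, easier to verify than the paper's case analysis.
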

\begin{proof}
	 Consider the subgraph $G_i$ of $G$ induced by the set $T_i$ of \tasks belonging to player $i$. For every \task $u$, denote by $\eta(u)$ the lower bound on its activation time imposed by $\pi_{-i}$. Formally, $\eta(u):= \max\{\pi(v): v \in T\setminus T_i, (v,u) \in E\}$. Note that $(u,w) \in E_i$ implies $\eta(w) \geq \eta(u)$ by transitivity of $E_i$.
	
	We give a greedy algorithm that solves the problem optimally. Starting from the first time step, the algorithm successively schedules a \task which minimizes $\eta$ among all \tasks with no incoming edges in $G_i$. Such a \task always exists, as $G$ (and hence all subgraphs) is acyclic. The \task with all its (outgoing) edges is then removed from $G_i$.
	
	To prove that the algorithm yields an optimal solution, let $\pi_i$ be the outcome of the algorithm. Suppose for contradiction that $\pi_i$ is not optimal. Let $\pi^*_i$ be an optimal schedule satisfying condition \eqref{eq:conflict-free} (which exists by Lemma~\ref{lem:conflict-free}) maximizing the first time slot for which any such schedule differs from $\pi_i$. Formally, there exists $k \in N$ such that $(\pi^*_i)^{-1}(i)=(\pi_i)^{-1}(i)$ for all $i < k$ and there is no optimal schedule $\pi'_i$ satisfying condition \eqref{eq:conflict-free} with $(\pi'_i)^{-1}(i)=(\pi_i)^{-1}(i)$ for all $i < k+1$.
	
	Let $a:=(\pi^*_i)^{-1}(k)$ and $b:=(\pi_i)^{-1}(k)$.
	Consider the subgraph of $G_i$ from which the first $k-1$ entries of $\pi_i$ (and hence of $\pi^*_i$) have been removed. First, observe that $a$ cannot have any incoming edges as $\pi^*_i$ satisfies condition \eqref{eq:conflict-free}. Hence, it holds that $\eta(b) \leq \eta(a)$, otherwise the algorithm would have selected $a$ rather than $b$.
	We distinguish three cases:
	\begin{enumerate}[leftmargin=*]
		\item $\eta(b) \leq \pi^*_i(a)$. In this case, we set
		\begin{equation*}
			\pi^{**}_i(w) :=\begin{cases}
				\pi^{*}_i(w)+1	& \pi^*_i(w) \in [\pi^*_i(a),\pi^*_i(b)-1]\\
				\pi^{*}_i(a)	& w=b\\
				\pi^{*}_i(w)	& \text{else}.
			\end{cases}
		\end{equation*}
		The reward generated by $b$ increases by $\pi^*_i(b)-\pi^*_i(a)$, at the same time the reward of at most $\pi^*_i(b)-\pi^*_i(a)$ \tasks decreases by 1. Hence, $\pi^{**}$ is still optimal and satisfies condition \eqref{eq:conflict-free}. Furthermore, $(\pi_i)^{-1}(k)=(\pi^{**}_i)^{-1}(k)$, contradicting $\pi^*$'s maximality.
		\item $\pi^*_i(a) < \eta(b) < \pi^*_i(b)$. We construct a new schedule $\tilde\pi^*_i$ with $\tilde\pi^*_i(b)=\eta(b)$ as above. Then, proceed as in 3.
		\item $\pi^*_i(b) \leq \eta(b)$. Construct a schedule $\pi^{**}_i$ as follows: Set $\pi^{**}_i(b):=\pi^{**}_i(a)$. Let $a'$ be the earliest successor of $a$. If $\pi^{*}_i(a')>\pi^{*}_i(b)$, set $\pi^{**}_i(a):=\pi^{*}_i(b)$ and $\pi^{**}_i(w):=\pi^{*}_i(w)$ for all other \tasks. Otherwise, set $\pi^{**}_i(a):=\pi^{*}_i(a')$ and let $a''$ be the earliest successor of $a'$. Proceed with $a''$ (and possibly its earliest successor) as above until an earliest successor $a^*$ satisfies $\pi^{*}_i(a^*)>\pi^{*}_i(b)$. The resulting schedule $\pi^{**}$ is still optimal and satisfies condition \eqref{eq:conflict-free}. Furthermore, $(\pi_i)^{-1}(k)=(\pi^{**}_i)^{-1}(k)$, contradicting $\pi^*$'s maximality.
	\end{enumerate}

	In all three cases, we reach a contradiction which proves that our assumption was wrong and $\pi_i$ is indeed optimal.
\end{proof}

In contrast, we can obtain the following statement about general rewards by reduction from single-player welfare maximization using Theorem~\ref{th:one-player-np}.

\begin{corollary}
For an \ISG with general rewards, computing a best response is NP-complete.
\end{corollary}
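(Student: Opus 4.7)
The plan is to show membership in NP and then reduce from the NP-complete single-player welfare maximization problem (Theorem~\ref{th:one-player-np}) to \textsc{ISG Best Response}.

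For membership, I would note that a candidate best response $\pi_i'$ is just a permutation of $T_i$, which can be verified in polynomial time: given $\pi_{-i}$ and $\pi_i'$, computing each activation time $a(v)$ for $v \in T_i$ is straightforward, and so the utility $R_i(\pi_{-i}, \pi_i')$ and comparison with the threshold $W$ can be done in polynomial time.

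For hardness, I would reduce from single-player welfare maximization with general rewards. Given an instance $((T),G,r)$ of the single-player problem together with a target welfare $W$, I construct a two-player ISG as follows. Let player 1 own exactly the tasks $T_1 := T$, inheriting the dependency structure and rewards from $G$ and $r$. Add a second player with a task set $T_2$ of $q := |T|$ dummy tasks having no dependencies to or from $T_1$, with any fixed (e.g., zero) rewards, so that the requirement $|T_1|=|T_2|$ from the model definition is met and the time horizon is exactly $q$. Fix an arbitrary schedule $\pi_2$ (say, the identity). Now ask whether player 1 has a best response $\pi_1'$ with $R_1(\pi_2, \pi_1') \ge W$. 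Since no edge of $G$ crosses between $T_1$ and $T_2$, player 1's utility depends only on $\pi_1'$ and on the induced subgraph on $T_1$, which is exactly the single-player instance, so the answer to the \textsc{ISG Best Response} instance is yes iff player~1 can achieve welfare at least $W$ in the original single-player problem.

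The main conceptual issue to watch is simply that the time horizon in the constructed ISG must coincide with the horizon in the single-player problem so that reward sums agree; this is why I set $|T_2| = |T_1|$ and give player~2 dummy tasks independent of $T_1$. The combinatorial core of the hardness is entirely imported from Theorem~\ref{th:one-player-np}, and no delicate gadgetry beyond the padding step is required.
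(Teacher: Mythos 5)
Your proof is correct and follows essentially the same route as the paper: membership in NP is immediate, and hardness is imported from single-player welfare maximization (Theorem~\ref{th:one-player-np}). The only cosmetic difference is your dummy second player; since the best-response problem is well-defined for $k=1$ (where it coincides exactly with single-player welfare maximization), even that padding step is optional.
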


\section{Welfare Maximization}
\label{sec:welfare}

A central planner would want to find a schedule that maximizes the \emph{welfare}, i.e., 
the most profitable \tasks in $T$ activated for the longest amount of time.

\vspace{.1cm}
\noindent
\textbf{Problem}: \textsc{ISG Welfare}. \\
\noindent
\textbf{Instance:} An \ISG $((T_1,\dots,T_k),G,r)$ and an integer $w$. \\
\noindent
\textbf{Question:} Is there a $\pi$ such that $\sum_{i=1}^k R_i(\pi)\ge w$?
\vspace{.1cm}

Intuitively, it might seem desirable to design schedules where no \task has to wait for its activation after it has been deployed. We call such schedule \emph{conflict-free}. For uniform rewards, if a conflict-free schedule exists then every welfare-maximizing schedule obviously has to be conflict-free. A similar statement holds for single-player games by the construction of $\pi^*_i$ in Lemma~\ref{lem:conflict-free} (proof omitted).

\begin{theorem}\label{thm:topological}
	For one player and general rewards, every welfare-maximizing schedule is a conflict-free schedule.
\end{theorem}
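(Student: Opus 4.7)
The plan is to proceed by contradiction. Suppose $\pi$ is welfare-maximizing but contains a conflict, i.e., an edge $(u,v)\in E$ with $\pi(u)>\pi(v)$. I will construct a schedule $\pi^*$ with strictly greater total reward, contradicting optimality.

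First I would select the conflict carefully: among all edges $(a,b)\in E$ with $\pi(a)>\pi(b)$ I pick one for which $\pi(a)$ is maximal, and rename it $(u,v)$. This maximality forbids the existence of any prerequisite $x$ of $u$ with $\pi(x)>\pi(u)$, because such an $x$ would yield another conflict $(x,u)$ with $\pi(x)>\pi(u)$, violating the maximality of $\pi(u)$. Consequently every prerequisite of $u$ occupies a position at most $\pi(u)-1$, so $a_\pi(u)=\pi(u)$.

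Next, to this pair $(u,v)$ I would apply the construction $\pi^*$ from the proof of Lemma~\ref{lem:conflict-free}: $v$ is moved to slot $\pi(u)$, while every task currently in slots $\pi(v)+1,\dots,\pi(u)$ (in particular $u$) is shifted back by one slot. Lemma~\ref{lem:conflict-free} supplies most of the bookkeeping I need: the reward of $v$ is unchanged since it still activates at slot $\pi(u)$, and each task depending on $v$ also depends on $u$ by transitivity, so its activation time is preserved. All other tasks in the shifted range have their deployment times decrease by one, so their activation times can only decrease; thus no task contributes strictly less reward in $\pi^*$ than in $\pi$.

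The critical new observation concerns $u$ itself. In $\pi^*$, $u$ sits at slot $\pi(u)-1$; each prerequisite of $u$ is either untouched (if originally at a position at most $\pi(v)-1$) or shifted one slot earlier (if originally in the shifted range), so no prerequisite exceeds position $\pi(u)-2$. Hence $a_{\pi^*}(u)=\pi(u)-1<\pi(u)=a_\pi(u)$, meaning $u$ is active for one additional time step and contributes an extra $r(u)$ to the welfare. Combined with the non-decrease of every other task's reward, this yields a strict improvement, contradicting the optimality of $\pi$. The hardest part of the plan is exactly this last step: Lemma~\ref{lem:conflict-free} on its own only guarantees non-decrease, and it is the maximality choice of $\pi(u)$ that pinpoints $u$ as a task whose activation time must \emph{strictly} drop.
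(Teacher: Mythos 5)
Your proposal is correct and is essentially the paper's own argument: the paper likewise reuses the shift construction $\pi^*_i$ from Lemma~\ref{lem:conflict-free} and observes that in the one-player case the chosen task $u$ activates immediately upon deployment, so moving it one slot earlier strictly increases its reward while all other rewards are non-decreasing. The only (immaterial) difference is your selection rule -- you take a conflict edge whose source has globally maximal position, whereas the paper keeps Lemma~\ref{lem:conflict-free}'s choice of $u$ as the latest prerequisite of $v$, which by transitivity gives the same key property $a_\pi(u)=\pi(u)$.
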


\noindent
However, this property does not hold in the case of more than one player and general rewards.  This can be seen by considering Example~\ref{ex:basic} and making all other \tasks dependent on $\pi_1$'s \task with reward 10. Then, any conflict-free schedule will yield welfare 319 while the welfare-maximizing schedule is 417, yielding the following theorem:
\begin{theorem}\label{thm:conflictisgood}
	For multiple players and general rewards, even if a conflict-free schedule exists, the welfare-maximizing schedule(s) might not be conflict-free.
\end{theorem}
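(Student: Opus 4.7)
The plan is to produce the specific counterexample sketched in the paragraph preceding the statement, filling in the verification. The guiding intuition is that conflict-freeness only ensures each task starts accruing reward the instant it is deployed; it does nothing to stop a player from having to wait to \emph{deploy} a high-reward task because its prerequisites lie late in someone else's schedule. A non-conflict-free schedule can trade a short idle period for a longer active period by deploying the high-reward task early and letting it sit dormant until its prerequisites come online.

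Concretely, I would take the two-player instance of Example~\ref{ex:basic}: call Player~1's tasks $A$ (reward~$10$) and $B_1,B_2$ (reward~$1$ each), and Player~2's tasks $D$ (reward~$1$) and $E_1,E_2$ (reward~$100$ each), with the existing dependencies $B_1\to B_2$, $A\to D$, $B_1\to E_1$, $B_1\to E_2$. I then augment the dependency graph with the edges $A\to B_1$, $A\to E_1$, $A\to E_2$ so that every task other than $A$ depends on $A$, and close under transitivity (which adds $A\to B_2$).

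Next I would enumerate the conflict-free schedules. Since $B_1,B_2,E_1,E_2$ all depend on $A$, any conflict-free schedule must place $A$ at position~$1$ of $\pi_1$, and $B_1$ must precede $B_2$ by the intra-player dependency, so $\pi_1=(A,B_1,B_2)$. For Player~2, the reward-$100$ tasks $E_1,E_2$ additionally depend on $B_1$ (deployed only at time~$2$), forcing them into positions~$2$ and~$3$ and $D$ into position~$1$. A direct sum over activation times then yields a single value $W$ for the conflict-free welfare, independent of swapping $E_1$ and $E_2$. I would then display the non-conflict-free schedule $\pi_1=(A,B_1,B_2)$, $\pi_2=(E_1,E_2,D)$: here $E_1$ is deployed at time~$1$ but activates only at time~$2$ when $B_1$ arrives, yet now both $E_1$ and $E_2$ are active for two time steps apiece rather than the $2$-and-$1$ split of the conflict-free case. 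The same tabulation shows that this schedule achieves welfare $W'>W$, the gain of one extra active time step on a reward-$100$ task dwarfing the unit-reward loss incurred by pushing $D$ to time~$3$.

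Because every conflict-free schedule attains the same value $W$ while the schedule above attains strictly more, no welfare-maximizing schedule is conflict-free, which is exactly what the theorem claims. There is no genuine obstacle beyond careful arithmetic bookkeeping of activation times under the augmented (transitively closed) dependency graph; the only conceptual care is the enumeration argument showing that the conflict-free welfare is unique, so that a single strictly larger witness suffices to rule out conflict-freeness for \emph{all} welfare-maximizing schedules.
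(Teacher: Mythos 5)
Your proposal is correct and takes essentially the same route as the paper: a concrete two-player counterexample (indeed the augmentation of Example~\ref{ex:basic} sketched in the main text; the paper's full proof merely uses the structurally identical instance with the reward-$10$ task replaced by a reward-$1$ task) in which the essentially unique conflict-free schedule is strictly beaten by a schedule that deploys the reward-$100$ tasks before their prerequisite is available. Your explicit check that every conflict-free schedule attains the same welfare $W$ (here $336$, beaten by $W'=434$, a net gain of $98$ exactly as in the paper's $309$ vs.\ $407$) is the small step the paper leaves implicit, so nothing is missing.
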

Turning to computational complexity, we observe that for one player welfare maximization is equivalent to finding a best response, hence with Theorem~\ref{th:br-uniform} we get the following.

\begin{corollary}\label{th:welfare-is-easy-1player}
	For uniform rewards, \textsc{ISG Welfare} can be solved in polynomial time for a single player.
	\end{corollary}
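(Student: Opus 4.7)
The plan is to observe that the corollary follows essentially for free from Theorem~\ref{th:br-uniform}. With a single player $i=1$, the social welfare of a schedule $\pi=(\pi_1)$ is simply $R_1(\pi_1)$, and since there are no other players the fixed schedule $\pi_{-1}$ in the \textsc{ISG Best Response} instance is the empty tuple. Concretely, given an instance $((T_1),G,r)$ of \textsc{ISG Welfare} with uniform rewards and target $w$, I would build the \textsc{ISG Best Response} instance $((T_1),G,r)$, with $i=1$, empty $\pi_{-1}$, and $W:=w$. A schedule $\pi_1$ witnesses the welfare instance if and only if it witnesses the best-response instance, since in both cases we are asking whether $R_1(\pi_1)\geq w$.

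The reduction is clearly polynomial (in fact, it is the identity on the instance data), so invoking Theorem~\ref{th:br-uniform} yields the polynomial-time algorithm. Equivalently, one can simply run the greedy algorithm from the proof of Theorem~\ref{th:br-uniform}: restrict to $G_1$, and repeatedly schedule a source vertex minimizing the external activation-lower-bound $\eta$, noting that when there is only one player the values $\eta(u)$ are all zero, so the algorithm degenerates into any topological ordering of $G_1$. Optimality of the output is exactly the content of Theorem~\ref{th:br-uniform}.

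There is essentially no obstacle to this argument beyond checking the reduction is well-formed; the main ``hard part'' has already been handled in the proof of Theorem~\ref{th:br-uniform}. It is worth noting that this is consistent with Theorem~\ref{thm:topological}: under uniform rewards, the greedy topological ordering produces a conflict-free schedule, which in the single-player setting is welfare-maximizing.
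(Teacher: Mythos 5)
Your proposal is correct and matches the paper's own argument: the paper likewise observes that for a single player, welfare maximization coincides with computing a best response (with empty $\pi_{-1}$) and invokes Theorem~\ref{th:br-uniform}. Your added remark that the greedy algorithm degenerates to a topological ordering of $G_1$ (which is conflict-free and hence optimal under uniform rewards) is a sound, if optional, elaboration.
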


However, when we either increase the number of players (Thm.~\ref{th:welfare-hard-2tasks-each}) or relax the restriction of uniform rewards (Thm.~\ref{th:one-player-np}), the problem is NP-hard for surprisingly restricted cases.

			\begin{theorem}\label{th:welfare-hard-2tasks-each}
			  \textsc{ISG Welfare} is NP-complete, even when the rewards are uniform and each player has two \tasks.
			\end{theorem}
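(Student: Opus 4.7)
Membership in NP is immediate since a schedule is a polynomial-size certificate whose welfare is computable in polynomial time. For hardness, I would exploit a structural fact specific to $q = |T_i| = 2$ together with uniform rewards: every \task $v$ contributes either $2$ or $1$ to the welfare, contributing exactly $2$ iff $\pi(v) = 1$ and $\pi(w) = 1$ for every predecessor $w \in N_G^-(v)$, and contributing $1$ otherwise. Hence the welfare of any schedule equals $2k$ plus the number of \tasks receiving this ``bonus'', and ISG Welfare reduces to deciding, for each player, which of their two \tasks to place at time $1$ so as to maximize the number of bonuses.

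For NP-hardness I would reduce from MAX-CUT. Given an instance $(H=(V,E),K)$, I build an \ISG as follows. For each $v \in V$, introduce a \emph{variable player} $P_v$ with two \tasks $a_v, b_v$ and no internal dependency; I interpret $\pi_v(a_v) = 1$ as ``$v$ is on side $A$'' and $\pi_v(b_v) = 1$ as ``$v$ is on side $B$''. Each $P_v$ always contributes exactly $3$ (its time-$1$ \task obtains a bonus, its time-$2$ \task contributes $1$). For each edge $e = \{u, v\} \in E$, introduce two \emph{crossing players} $Q_e^{AB}$ and $Q_e^{BA}$, each with two \tasks; \emph{both} \tasks of $Q_e^{AB}$ receive incoming edges from $a_u$ and $b_v$, and \emph{both} \tasks of $Q_e^{BA}$ receive incoming edges from $b_u$ and $a_v$. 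The dependency graph is acyclic, and I take its (here trivial) transitive closure.

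Whichever of its two \tasks player $Q_e^{AB}$ schedules at time $1$ receives a bonus iff both predecessors $a_u$ and $b_v$ are at time $1$, i.e.\ iff $u$ is on side $A$ and $v$ is on side $B$; the same holds for $Q_e^{BA}$ with the roles swapped. Hence for a cut edge $e$, exactly one of $Q_e^{AB}, Q_e^{BA}$ obtains a bonus, contributing $3 + 2 = 5$, while for a non-cut edge neither does and the pair contributes $2 + 2 = 4$. The total welfare is therefore $3|V| + 4|E| + (\text{number of cut edges})$, so setting the welfare threshold to $w = 3|V| + 4|E| + K$ yields a polynomial-time reduction from MAX-CUT.

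The main obstacle, and the reason for the apparently redundant step of making \emph{both} \tasks of each crossing player depend on the same pair of literal \tasks, is that otherwise the central planner could ``cheat'' by scheduling the dependency-free \task of $Q_e^{AB}$ first and always collect a bonus, severing the link between the bonus and the underlying conjunction $u \in A \wedge v \in B$. Duplicating the dependencies forces $Q_e^{AB}$'s bonus to track the conjunction's truth regardless of $Q_e^{AB}$'s internal order, which is exactly what the reduction needs.
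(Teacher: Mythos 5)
Your proof is correct, but it takes a genuinely different route from the paper's. The paper reduces from \textsc{Min 2SAT}: each variable gets a player with tasks $\{x,\neg x\}$ and each clause a player with a chain $(c_1,c_2)$ where $c_1$ depends on both literals; scheduling a literal first encodes setting it to \emph{false}, so the clause player earns its first-slot bonus exactly when the clause is \emph{unsatisfied}, which is why the source problem is the minimization version of 2SAT and why the reverse direction needs a small counting argument ($3n+3m-k$ forces at least $m-k$ clause bonuses, each certifying an unsatisfied clause). You instead reduce from \textsc{Max-Cut}, encode the side of a vertex by which of the two dependency-free tasks is scheduled first, and attach to each edge \emph{two} auxiliary players whose tasks both depend on the pair $(a_u,b_v)$, respectively $(b_u,a_v)$; your observation that with $q=2$ and uniform rewards each task contributes exactly $1$ plus a possible first-slot bonus, and your trick of duplicating the dependencies on both tasks of a crossing player so that its contribution ($3$ versus $2$) is independent of its internal order, make the welfare an exact affine function $3|V|+4|E|+\mathrm{cut}(A,B)$ of the induced partition, so both directions of the equivalence are immediate with no case analysis. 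The trade-off is cosmetic: your construction uses two gadget players per edge rather than one per clause, while the paper's needs the minimization twist and a slightly more careful reverse direction; both establish the same statement for uniform rewards and two tasks per player.
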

			\begin{proof}
			 The problem is in NP since we can efficiently compute the welfare of a given schedule. For NP-hardness, we reduce from \textsc{Min 2SAT}~\citep{KKM94a} which asks: Given a 2CNF formula $F$ where each clause contains exactly two literals, and an integer $k$, is there an assignment to the variables of $F$ such that at most $k$ clauses are satisfied?

			 For each variable $x$ in $F$, create a player $P_x$ with \tasks $T_x = \{x, \neg x\}$.
			 For each clause $c$ in $F$, create a player $P_c$ with \tasks $T_c = \{c_1, c_2\}$.
			 For each clause $c = (\ell_1 \vee \ell_2)$, the precedence graph contains $(c_1,c_2)$,
			 $(\ell_1,c_1)$, and $(\ell_2,c_1)$.
			 Rewards are uniform, and we set $w = 3n + 3m - k$, where $n$ and $m$ are the number of variables and clauses of $F$. 

			 It remains to prove that $F$ has an assignment satisfying at most $k$ clauses if and only if the ISG has a schedule generating a reward of at least $w$.
			 For the forward direction, suppose $F$ has an assignment $\alpha: \mbox{var}(F)\rightarrow \{0,1\}$ satisfying at most $k$ clauses.
			 Consider the schedule where, for each variable $x$, the player $P_x$ schedules first the literal of $x$ that is set to false by $\alpha$, i.e., $x$ is scheduled before $\neg x$ iff $\alpha(x)=0$.
			 Additionally, for each clause $c$, the \task $c_1$ is scheduled before $c_2$.
			 This schedule generates a reward of $3$ for each variable: a reward of $1$ at the first time step and a reward of $2$ at the second time step.
			 For a satisfied clause $c$, the schedule generates a reward of $2$: at the first time step no reward is generated since the literal satisfying the clause is scheduled at the second time step and there is an arc from that literal to $c_1$, and a reward of $2$ is generated at the second time step.
			 For an unsatisfied clause $c$, the schedule generates a reward of $3$: since neither literal satisfies the clause, both literals are scheduled at the first time step.
			 Thus, the utility generated for this schedule is at least $3n+3m-k$.

			 For the reverse direction, let $\pi$ be a schedule generating a reward of at least $w$.
			 Consider the assignment $\alpha: \mbox{var}(F) \rightarrow \{0,1\}$ with $\alpha(x)=0$ iff player $P_x$ schedules $x$ at the first time step.
			 Note that at the second time step, each player generates a reward of $2$.
			 Also, each player corresponding to a variable generates an additional reward of $1$ at the first time step since his \tasks have in-degree $0$.
			 So, at least $3n+3m-k-(3n+2m)=m-k$ additional clause players generate a reward of $1$ at the first time step.
			 But, for each such clause $c$, $c_1$ is scheduled before $c_2$ and both literals occurring in $c$ are scheduled at the first time step, which means that the assignment $\alpha$ sets these literals to false. Therefore, $\alpha$ does not satisfy $c$. We conclude that $\alpha$ satisfies at most $k$ clauses.
\end{proof}

\begin{theorem}\label{th:one-player-np}
	For general rewards, \textsc{ISG Welfare} is NP-complete even for a single player.
\end{theorem}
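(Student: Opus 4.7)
The plan is to establish NP-completeness by reducing from the classical scheduling problem $1 \mid \text{prec} \mid \sum w_j C_j$ of minimizing the weighted sum of completion times on a single machine with precedence constraints, a problem shown to be NP-hard by Lenstra and Rinnooy Kan, and independently by Lawler, in 1978. Membership in NP is immediate because the utility of any given schedule can be evaluated in polynomial time.

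Given an instance of $1 \mid \text{prec} \mid \sum w_j C_j$ with jobs $J_1, \ldots, J_n$ of integer processing times $p_j$, weights $w_j \ge 0$, and a precedence DAG, I would construct a single-player \ISG by replacing each job $J_j$ with a chain of $p_j$ unit-duration \tasks $t_j^1 \to t_j^2 \to \cdots \to t_j^{p_j}$, assigning $r(t_j^{p_j}) := w_j$ and $r(t_j^i) := 0$ for $i < p_j$, and adding an edge $t_i^{p_i} \to t_j^1$ whenever $J_i \prec J_j$ in the original instance (then taking the transitive closure to match the paper's convention). By Lemma~\ref{lem:conflict-free}, a welfare-maximizing schedule may be assumed conflict-free, so $a(v) = \pi(v)$ for every \task $v$. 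Writing $C_j := \pi(t_j^{p_j})$ and $T := \sum_j p_j$, the welfare equals $\sum_j w_j(T - C_j + 1)$, so maximizing welfare is equivalent to minimizing $\sum_j w_j C_j$ subject to the induced precedences, and the welfare threshold translates to a $\sum w_j C_j$ threshold.

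The easy half of the equivalence is that any valid non-preemptive schedule of the original instance translates directly into an \ISG schedule of matching cost by concatenating chains in that job order. The main obstacle is the converse, since the \ISG is free to interleave chains and it is not a priori clear that interleaving cannot beat every non-preemptive arrangement. My plan is a direct domination argument: for any \ISG schedule $\pi$, sort the original jobs by increasing $C_j$ and execute them non-preemptively in that sorted order. This order respects the original precedences because the edge $t_i^{p_i} \to t_j^1$ forces $C_i < C_j$ whenever $J_i \prec J_j$. Moreover, in $\pi$ the $\sum_{\ell \le k} p_{j_\ell}$ \tasks belonging to the first $k$ chains (in the sorted order) occupy distinct positions, each at most $C_{j_k}$, which gives $C_{j_k}^{\mathit{orig}} = \sum_{\ell \le k} p_{j_\ell} \le C_{j_k}^{\mathit{ISG}}$. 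Summing against the nonnegative weights yields $\sum_j w_j C_j^{\mathit{orig}} \le \sum_j w_j C_j^{\mathit{ISG}}$, closing the reduction and establishing NP-completeness.
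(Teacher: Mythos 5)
Your proposal follows essentially the same route as the paper's proof: both reduce from single-machine weighted completion time under precedence constraints (Lenstra and Rinnooy Kan), both invoke conflict-freeness of optimal single-player schedules (your appeal to Lemma~\ref{lem:conflict-free}; the paper packages this as Theorem~\ref{thm:topological}), and both rest on the identity that for a conflict-free schedule the welfare equals $(q+1)\sum_j w_j - \sum_j w_j C_j$, so the welfare threshold is an affine rewriting of the completion-time threshold. The only real difference is the source variant: the paper reduces from the unit-processing-time version ($p_j=1$), so its reduction is essentially the identity on the precedence graph, whereas you start from general $p_j$ and simulate each job by a chain of $p_j$ unit tasks with the reward on the last link. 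That choice obliges you to prove the interleaving-domination step, and your argument for it is sound: under the conflict-free assumption the sorted order (by position of each job's last chain task) respects the original precedences, and the counting bound $\sum_{\ell \le k} p_{j_\ell} \le C_{j_k}$ follows because all tasks of the first $k$ chains occupy distinct slots no later than $C_{j_k}$. The one caveat is encoding: if the $p_j$ are given in binary, your chains have pseudo-polynomial length and the reduction is not polynomial as stated. This is easily repaired --- either reduce from the $p_j=1$ variant (whereupon your construction collapses to the paper's single-task-per-job reduction) or note that $1\mid \mathrm{prec}\mid \sum_j w_j C_j$ is strongly NP-hard, so processing times may be assumed polynomially bounded --- but it should be said explicitly. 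With that fix your proof is complete; the extra generality buys nothing for the theorem itself, though your domination argument is a nice self-contained justification that interleaving chains cannot beat non-preemptive orderings.
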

\noindent
The proof, omitted for space, is a reduction from the NP-hard problem \textsc{Single machine weighted completion time} \citep{lenstra1978complexity}.
It relies on Theorem~\ref{thm:topological} and an adjustment of rewards.

\subsection{Integer Programming Formulation}

 \begin{figure}[t]
 \centering
 \includegraphics[width=.99\columnwidth]{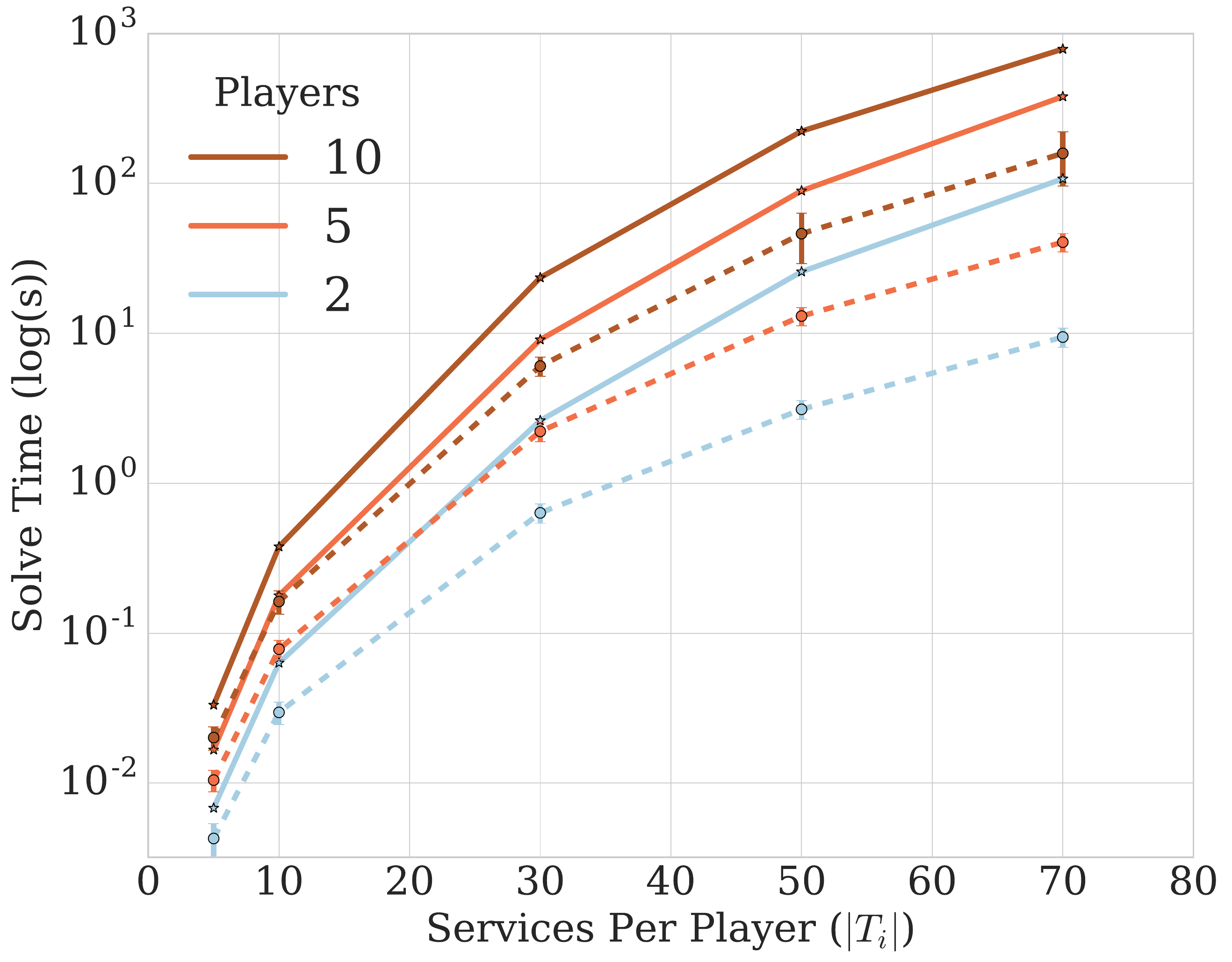}
 \caption{Mean runtime of the ILP over 1,000 random \ISG instances varying the number of players, \tasks, and reward type; error bars represent one standard deviation ($\sigma$). The solid lines are instances with general rewards, the dashed lines are instances with uniform rewards. The plot is semi-logarithmic, so a straight line represents an exponential increase in time.  For the general rewards case, error bars are not included for clarity; for $|T_i| \in \{10,30,50\}$ the numbers are small, $\sigma=30$ seconds in the worst case, however, for 70 \tasks and 10 players this balloons to 200 seconds.}
 \label{fig:results}
 \end{figure}

While the general problem of finding a welfare maximizing schedule for an \ISG instance is computationally hard, it may still be solvable for instances of moderate size.  The \textsc{ISG Welfare} problem admits a natural integer linear programming (ILP) formulation.
For each \task $v\in T$ and time step $t \in [q]$, we introduce two binary decision variables $a_{v,t}$ and $s_{v,t}$. Let $s_{v,t} = 1$ if and only if \task $v$ is scheduled at time $t$, and $a_{v,t} = 1$ if and only if \task $v$ is active at time $t$.
\vspace{-1.5mm}
\[
\begin{array}{rrr}%
\text{max} & \sum_{v\in T} \sum_{t=1}^q a_{v,t} \cdot r(v) 		& 		\\
\text{s.t.} 	& \sum_{t=1}^q s_{v,t} = 1 					& \forall v\in T \\
		& \sum_{v\in T_i} s_{v,t} = 1 					& \forall i \in [k], \forall t \in [q] \\
  		& a_{v,t} \le \sum_{t' = 1}^t s_{v,t'} 				& \forall v\in T, \forall t \in [q]\\
  		& \vphantom{\sum_t^t} a_{v,t} \le a_{w,t} 			& \forall (w,v)\in E, \forall t \in [q] \\
\end{array}
\]

We implemented the ILP and solved 1000 randomly generated instances where (a) general rewards are drawn from [50,100] and (b) rewards are uniform.  
The dependency graphs are generated by first randomly permuting the list of all \tasks; 
then for each \task $i$, drawing a random number of child \tasks $c \in \{0,1,2\}$ and adding edge
$(i, i+c)$ with probability $0.5$.  Increasing the number/likelihood of dependencies by 
increasing the potential number of children or increasing the connection probability significantly
increases runtime.
Figure~\ref{fig:results} shows the results for different parameters 
using Gurobi 6.5 on a computer equipped with an 2.0 GHz Intel Xeon E5405 CPU with 4 GB of RAM. 
The results suggest that, despite worst case hardness, the running times remain feasible, at worst $\approx$ 600s, for practically relevant problem sizes: up to 10 players with 70 \tasks each.

\section{Nash Dynamics and Equilibria} 
\label{nash}

We now turn to the situation where players may respond to each other's schedule changes. This is an important question for game theoretic analyses as it allows us to see which states leave no incentives for self-interested players to deviate; and what can happen when players are continually responding to the moves of one another. An important first question is whether a sequence of best responses terminates.

\begin{theorem}
For \ISGs with uniform rewards, best responses can cycle.
\end{theorem}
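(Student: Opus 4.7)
The plan is to exhibit a concrete \ISG with uniform rewards together with an explicit sequence of alternating best-response moves that returns to a previously visited joint schedule. Since cycling is witnessed by a single instance, it is enough to construct one small example and walk through the dynamics step by step.

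I would aim for a minimal construction with two players (moving to three only if two prove insufficient), each controlling a handful of tasks, with cross-edges carefully placed so that player $i$'s optimal ordering is sensitive to \emph{when} the other player deploys particular tasks. The governing quantity is $\eta(v) = \max\{\pi_{-i}(u) : u \in T \setminus T_i,\ (u,v) \in E\}$, the lower bound on the activation time of $v \in T_i$ imposed by the other player's schedule. By Theorem~\ref{th:br-uniform}, a best response orders player $i$'s tasks greedily by $\eta$ subject to the internal precedence constraints from Lemma~\ref{lem:conflict-free}. A cycle arises precisely when moving one of player $-i$'s tasks earlier in order to unlock many of player $i$'s tasks in turn causes player $i$ to reorder, which changes $\eta$ for some task of player $-i$, prompting a reciprocal reorder that eventually returns the joint schedule to an earlier state.

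The concrete steps would be: (i) specify the dependency graph $G$ and an initial joint schedule $\pi^{(0)}$; (ii) apply the greedy algorithm of Theorem~\ref{th:br-uniform} for one player while fixing the other's schedule, thus obtaining by construction a best-response move $\pi^{(t)} \to \pi^{(t+1)}$; (iii) alternate which player moves, recording each visited joint schedule; (iv) show that two of the recorded schedules coincide, certifying a cycle. Verification of optimality at each step is mechanical thanks to Theorem~\ref{th:br-uniform}.

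The main obstacle is calibration: with uniform rewards a player's utility is simply the total active time across their tasks, so ties between candidate orderings are frequent and may allow a best-responding player to pick a tie-breaking schedule that terminates the dynamics rather than continuing the cycle. I would address this either by committing to the deterministic lexicographic tie-breaking that the greedy procedure naturally performs, or by attaching a few auxiliary tasks (with carefully chosen in-neighborhoods) that break the relevant ties strictly in favour of the cycling choice without disturbing the qualitative structure of the construction. A secondary concern is keeping the example small enough to present as a labelled figure in the style of Example~\ref{ex:basic} while still containing enough cross-edges to produce non-trivial reordering.
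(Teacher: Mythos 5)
Your proposal correctly identifies the right (and indeed the paper's) strategy: the statement is an existential claim, so it suffices to exhibit one concrete instance with uniform rewards and an explicit sequence of best-response moves that revisits a joint schedule. The paper does exactly this with two players and four tasks each, giving four schedules $\pi^A,\pi^B,\pi^C,\pi^D$, each obtained from the previous one by a best response of the alternating player, with $\pi^D$ leading back to $\pi^A$. You also correctly note the one subtlety such a construction must handle, namely that with uniform rewards best responses are rarely unique, and you are right that for the claim as stated this is not fatal: ``best responses can cycle'' only requires that \emph{some} sequence of best responses cycles, so when a player is indifferent one may choose the tie-breaking that continues the cycle.

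The genuine gap is that the construction itself is missing. Everything in your write-up is a plan for producing the witness --- ``specify the dependency graph,'' ``apply the greedy algorithm,'' ``show that two recorded schedules coincide'' --- but no dependency graph, no initial schedule, and no verified sequence of moves is actually given. For an existence statement of this kind the example \emph{is} the proof; without it there is nothing to check, and it is not obvious a priori that a small instance with the required sensitivity exists (that is precisely what the theorem asserts). Moreover, the calibration difficulty you flag is real and cannot be waved away by appealing to ``lexicographic tie-breaking of the greedy procedure'': you must still verify, for the specific instance you build, that each move in your recorded sequence attains the maximum utility for the moving player (the paper does this by stating the utilities $R(\pi^X_i)$ at each of the four configurations). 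Until you supply a concrete graph and check optimality of each of the moves in the cycle, the argument is an intention rather than a proof.
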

\begin{proof}
Consider the following example depicting a sequence of best responses.
Starting with the lower right schedule $\pi_D$ we move to the upper left schedule $\pi_A$
where Player 2 has changed his schedule in a best response to $\pi_D$.  We then read left
to right, top to bottom, to end up back at $\pi_D$.

\smallskip
\noindent
\begin{tabular}{l | r}
	\parbox{0.45\linewidth}{
        \centering
        \scalebox{0.75}{
        \begin{tikzpicture}[]
        \node (p1) {$\pi^A_1$:};
        \node[vertex] (t_0_0) [right of=p1]{$c$};
        \node[vertex] (t_0_1) [right of=t_0_0] {$a$};
        \node[vertex] (t_0_2) [right of=t_0_1] {$d$};
        \node[vertex] (t_0_3) [right of=t_0_2] {$b$};
        \node (p2)  [below of=p1] {$\pi^A_2$:};
        \node[vertex] (t_1_0) [below of=t_0_0] {$d$};
        \node[vertex] (t_1_1) [right of=t_1_0] {$a$};
        \node[vertex] (t_1_2) [right of=t_1_1] {$c$};
        \node[vertex] (t_1_3) [right of=t_1_2] {$b$};
        \path[edge] (t_1_2) to [bend left=15] (t_0_0);
        \path[edge] (t_0_1) to [bend left=10] (t_1_1);
        \path[edge] (t_1_0) to [bend left=15] (t_0_2);
        \path[edge] (t_0_3) to [bend left=10] (t_1_3);
        \end{tikzpicture}
        }
        $\pi^A_2$ response to $\pi^D_1$
        $R(\pi^A_1) = 8$, $R(\pi^A_2) = 10$
        \vspace{1mm}
	}
&
	\parbox{0.45\linewidth}{
        \centering
        \scalebox{0.75}{
        \begin{tikzpicture}[]
        \node (p1) {$\pi^B_1$:};
        \node[vertex] (t_0_0) [right of=p1]{$d$};
        \node[vertex] (t_0_1) [right of=t_0_0] {$b$};
        \node[vertex] (t_0_2) [right of=t_0_1] {$c$};
        \node[vertex] (t_0_3) [right of=t_0_2] {$a$};
        \node (p2)  [below of=p1] {$\pi^B_2$:};
        \node[vertex] (t_1_0) [below of=t_0_0] {$d$};
        \node[vertex] (t_1_1) [right of=t_1_0] {$a$};
        \node[vertex] (t_1_2) [right of=t_1_1] {$c$};
        \node[vertex] (t_1_3) [right of=t_1_2] {$b$};
        \path[edge] (t_1_0) to [bend left=10] (t_0_0);
        \path[edge] (t_0_1) to [bend left=15] (t_1_3);
        \path[edge] (t_1_2) to [bend left=10] (t_0_2);
        \path[edge] (t_0_3) to [bend left=15] (t_1_1);
        \end{tikzpicture}
        }
        $\pi^B_1$ response to $\pi^A_2$ 
        $R(\pi^B_1) = 10$, $R(\pi^B_2) = 8$
        \vspace{1mm}
        }
\\ \hline
	\parbox{0.45\linewidth}{
	 \vspace{0.5mm}
        \centering
         \scalebox{0.75}{
        \begin{tikzpicture}[]
        \node (p1) {$\pi^C_1$:};
        \node[vertex] (t_0_0) [right of=p1]{$d$};
        \node[vertex] (t_0_1) [right of=t_0_0] {$b$};
        \node[vertex] (t_0_2) [right of=t_0_1] {$c$};
        \node[vertex] (t_0_3) [right of=t_0_2] {$a$};
        \node (p2)  [below of=p1] {$\pi^C_2$:};
        \node[vertex] (t_1_0) [below of=t_0_0] {$c$};
        \node[vertex] (t_1_1) [right of=t_1_0] {$d$};
        \node[vertex] (t_1_2) [right of=t_1_1] {$b$};
        \node[vertex] (t_1_3) [right of=t_1_2] {$a$};
        \path[edge] (t_1_0) to [bend left=5] (t_0_2);
        \path[edge] (t_1_1) to [bend left=10] (t_0_0);
        \path[edge] (t_0_1) to [bend left=10] (t_1_2);
        \path[edge] (t_0_3) to [bend left=10] (t_1_3);
        \end{tikzpicture}
        }
        $\pi^C_2$ response to $\pi^B_1$
        $R(\pi^C_1) = 9$, $R(\pi^C_2) = 10$
       	}
&
	\parbox{0.45\linewidth}{
	 \vspace{0.5mm}
        \centering
         \scalebox{0.75}{
        \begin{tikzpicture}[]
        \node (p1) {$\pi^D_1$:};
        \node[vertex] (t_0_0) [right of=p1] {$c$};
        \node[vertex] (t_0_1) [right of=t_0_0] {$a$};
        \node[vertex] (t_0_2) [right of=t_0_1] {$d$};
        \node[vertex] (t_0_3) [right of=t_0_2] {$b$};
        \node (p2)  [below of=p1] {$\pi^D_2$:};
        \node[vertex] (t_1_0) [below of=t_0_0] {$c$};
        \node[vertex] (t_1_1) [right of=t_1_0] {$d$};
        \node[vertex] (t_1_2) [right of=t_1_1] {$b$};
        \node[vertex] (t_1_3) [right of=t_1_2] {$a$};
        \path[edge] (t_1_0) to [bend left=10] (t_0_0);
        \path[edge] (t_1_1) to [bend left=10] (t_0_2);
        \path[edge] (t_0_1) to [] (t_1_3);
        \path[edge] (t_0_3) to [bend left=10] (t_1_2);
        \end{tikzpicture}
        }
	$\pi^D_1$ response to $\pi^C_2$
	$R(\pi^D_1) = 10$, $R(\pi^D_2) = 9$
	}
\end{tabular}
\end{proof}

\subsection{\ISGs with Uniform Rewards}

A schedule $\pi$ is in \emph{pure Nash equilibrium} (PNE) if no player can obtain strictly more utility by unilaterally changing his own schedule; formally, $R_i(\pi_{-i}, \pi) \ge R_i(\pi_{-i}, \pi_i')$ for all players $i$ and all schedules $\pi_i'$ of player $i$.
For instance, note that the above example, despite having a sequence of best responses that cycle, does admit the PNE depicted below:
\begin{center}
  \scalebox{0.8}{
        \begin{tikzpicture}[]
        \node (p1) {$\pi^A_1$:};
        \node[vertex] (t_0_0) [right of=p1]{$a$};
        \node[vertex] (t_0_1) [right of=t_0_0] {$b$};
        \node[vertex] (t_0_2) [right of=t_0_1] {$c$};
        \node[vertex] (t_0_3) [right of=t_0_2] {$d$};
        \node (p2)  [below of=p1] {$\pi^A_2$:};
        \node[vertex] (t_1_0) [below of=t_0_0] {$a$};
        \node[vertex] (t_1_1) [right of=t_1_0] {$b$};
        \node[vertex] (t_1_2) [right of=t_1_1] {$c$};
        \node[vertex] (t_1_3) [right of=t_1_2] {$d$};
        
        \path[edge] (t_0_0) to [bend left=10] (t_1_0);
        \path[edge] (t_0_1) to [bend left=10] (t_1_1);
        \path[edge] (t_1_2) to [bend left=10] (t_0_2);
        \path[edge] (t_1_3) to [bend left=10] (t_0_3);
        \end{tikzpicture}
}
\end{center}
Questions of existence and computation of PNEs are fundamental to a game theoretic analysis as a PNE schedule
is stable with respect to selfish players who may try to unilaterally increase their utility by playing a different schedule.

	\begin{theorem}\label{thm:uniform_existence}
		Any \ISG with uniform rewards admits a pure Nash equilibrium which can be computed in polynomial time.
	\end{theorem}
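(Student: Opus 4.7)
My plan is to prove the theorem constructively by giving an explicit polynomial-time algorithm and then verifying that its output is a PNE.

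\textbf{Algorithm.} I would describe a greedy ``balanced topological sort'' of the dependency graph $G$. Maintain a counter $c_i$ for each player, initially $0$. At each iteration, let $S$ be the set of sources in the remaining graph (i.e., tasks whose predecessors in $G$ have all been scheduled). Among players $i$ with $S \cap T_i \ne \emptyset$, pick $i^*$ minimizing $c_{i^*}$ (breaking ties by player index), and pick any $v^* \in S \cap T_{i^*}$ (breaking ties by task index). Schedule $v^*$ at position $c_{i^*}+1$ for player $i^*$, then set $c_{i^*} \mathrel{+}= 1$. Repeat until all tasks are scheduled. The loop terminates in $kq$ iterations of polynomial work each. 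Since $G$ is acyclic, a source always exists, and each $c_i$ reaches exactly $q$, so the output $\pi = (\pi_1,\dots,\pi_k)$ is a valid schedule. Because tasks are processed in a topological order of $G$, each $\pi_i$ is a topological order of $G_i$, hence satisfies condition~\eqref{eq:conflict-free} of Lemma~\ref{lem:conflict-free}.

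\textbf{PNE verification.} Fix a player $i$. By Theorem~\ref{th:br-uniform}, any best response of player $i$ can be produced by the greedy algorithm that, at each step, picks a source of the remaining $G_i$ with minimum $\eta$. It therefore suffices to show that $\pi_i$ is consistent with some run of this greedy algorithm. The core claim is:

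\emph{Key Claim.} For $v, w \in T_i$ with $\pi_i(v) < \pi_i(w)$ and no directed path from $v$ to $w$ in $G_i$, we have $\eta(v) \le \eta(w)$.

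To prove the key claim I would track, for each task $u$, the algorithm step $s_u$ at which it is scheduled, noting $\pi_j(u) = c_j(s_u)$. Let $u^*_v$ (resp.\ $u^*_w$) be the external predecessor of $v$ (resp.\ $w$) attaining $\eta(v)$ (resp.\ $\eta(w)$). The argument then splits on whether $w$ was already a source at step $s_v$. In the easy case, $w$ is not yet a source at step $s_v$, so its last external predecessor is scheduled after $s_v$; combined with the smallest-counter priority this yields $\eta(v) \le \eta(w)$. In the harder case both $v$ and $w$ are simultaneous sources: I would use the fact that $s_{u^*_v}$ and $s_{u^*_w}$ precede $s_v$ and that whichever of the two predecessors is scheduled later sits in a player whose counter at that later step is at least the counter of the other predecessor's player (again by the smallest-counter-with-source rule), again giving $\eta(v) \le \eta(w)$.

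\textbf{Main obstacle.} The delicate part is the simultaneous-source case when $u^*_v$ and $u^*_w$ belong to different players, because the per-player counters can become unbalanced whenever a player lacks a source. The proof will need a careful invariant describing how the counters of players without sources lag behind the rest, so that one can still compare $c_{j_v}(s_{u^*_v})$ with $c_{j_w}(s_{u^*_w})$. Once the key claim is established, $\pi_i$ is a valid output of greedy best-response, hence a best response to $\pi_{-i}$; since $i$ was arbitrary, $\pi$ is a pure Nash equilibrium, and the algorithm constructs it in polynomial time.
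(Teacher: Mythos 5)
Your construction is genuinely different from the paper's: the paper's procedure is activation-driven (it repeatedly picks an unscheduled task minimizing a tight lower bound $\bar\eta$ on its activation time and schedules it together with all of its outstanding prerequisites, so the best-response property is built into the construction), whereas your balanced topological sort ignores activation times entirely and breaks ties arbitrarily. As far as I can tell your algorithm does in fact output a pure Nash equilibrium, so the approach is salvageable. However, what you have written is a plan rather than a proof: the entire difficulty of the theorem sits in your Key Claim, and you explicitly leave its hard case open (``the proof will need a careful invariant describing how the counters of players without sources lag behind the rest''), without supplying that invariant. The correctness of the constructed schedule is the whole content of the theorem, and it is not established.

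There is also a concrete flaw in the plan itself: the Key Claim is false as stated, so the reduction ``it suffices to show that $\pi_i$ is consistent with some run of the greedy of Theorem~\ref{th:br-uniform}'' cannot be carried out. Take three players with two tasks each; players 1 and 2 have dependency-free tasks, player 3 owns $v$ and $w$, where $v$ depends on player 1's task $x$ and $w$ has no predecessors. With ties broken by index, your algorithm picks player 1 (scheduling $x$) at step 1, player 2 at step 2, and player 3 at step 3, at which point both $v$ and $w$ are available; if $v$ has the smaller index it is placed first, giving $\pi_3(v)=1<\pi_3(w)$ with no path from $v$ to $w$ but $\eta(v)=1$ while $w$ has no external predecessors. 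The greedy of Theorem~\ref{th:br-uniform} must place $w$ first, so your output is not consistent with any of its runs (it is still a best response, because $v$ activates immediately). What you actually need is the weaker claim that $\pi_i(v)<\pi_i(w)$ implies $\eta(v)\le\pi_i(v)$ or $\eta(v)\le\eta(w)$, together with an extra argument that this pairwise condition certifies a best response (for a topological $\pi_i$ each task activates at $\max(\pi_i(\cdot),\eta(\cdot))$, and an exchange argument over permutations does it), since greedy-consistency is sufficient but not necessary. Proving the weaker claim is exactly the invariant you postpone: if $\eta(v)=\pi_j(u)>\pi_i(v)$ for an external predecessor $u$ of $v$, then at the step $u$ was scheduled every player whose counter was below player $j$'s, including player $i$, had no available source; every ancestor of $w$ still unscheduled at that step must then belong to such a sourceless player and hence itself have an unscheduled predecessor, and by acyclicity this forces some ancestor of $w$ outside $T_i$ to be scheduled later, at a position at least $\eta(v)$, giving $\eta(w)\ge\eta(v)$. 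None of this appears in your proposal, and it is precisely where the work lies; the paper's proof avoids the issue by making the construction activation-driven from the start.
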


	\begin{proof}[Proof (some details omitted).]
		We iteratively construct a schedule such that every player's schedule is a best response.

		Let 
			$N_i^-(v) := (N_G^-(v) \cup \{v\}) \cap T_i$ denote the closed in-neighborhood of \task $v$ under player $i$'s control, $T_i^{(t)}$ the set of \tasks of player $i$ already scheduled before iteration $t$ and $\alpha_i^{(t)} := |T_i^{(t)}|$.
		In every iteration, we will choose a \task and schedule it together with all remaining \tasks that it depends on. This means that for a \task $v \in T_i^{(t)}$, $a(v)$ is well-defined during iteration $t$.
		We can therefore define
		\begin{equation*}
			\bar\eta^{(t)}_i(v) := \begin{cases}
										\max_{w \in N_i^-(v)} a(w), & N_i^-(v) \setminus T_i^{(t)} = \emptyset\\
										\alpha_i^{(t)} + |N_i^-(v) \setminus T_i^{(t)}|, &\text{else}
									\end{cases}
		\end{equation*}
		Now, $\bar\eta^{(t)}(v) := \max_{i \in N} \bar\eta^{(t)}_i(v)$ represents a tight lower bound for $a(v)$ in any schedule which is a \enquote{completion} of the partial schedule from iteration $t$ (achieved if $v$ and all prerequisites are scheduled immediately).
		
		Similar to Theorem~\ref{th:br-uniform}, it can be verified that player $i$'s schedule $\pi_i$ is a best response if
		for every iteration $t$ and player $i$, the condition \eqref{eq:conflict-free} from Lemma~\ref{th:br-uniform} holds for all \tasks $v,w \in T_i^{(t)}$ and if $v \in T_i^{(t)}\setminus T_i^{(t-1)}$, then $\bar\eta^{(t-1)}(v)$ is minimal among all \tasks from the set $T_i \setminus T_i^{(t-1)}$.
		Furthermore, for every iteration $t$ and services $v \in T_i\setminus T_i^{(t)}$ and $w \in T_i^{(t)}$, we show that $(v,w) \notin E_i$ and $\bar\eta^{(t)}(w) \leq \bar\eta^{(t)}(v)$.
		
		In iteration $t$, we proceed in the following way: Choose a \task $v^*$ that minimizes $\bar\eta^{(t)}$ over all \tasks not yet scheduled and that has no incoming edges from \tasks belonging to the same player. Such a \task must exist, as if $(w,v^*) \in E$ for some \task $w$, then $\bar\eta^{(t)}(w) \leq \bar\eta^{(t)}(v^*)$. Let $i$ be the player such that $v^* \in T_i$.
		
		Assuming that the above conditions are satisfied for iteration $t$, we can now show that 
 %
 %
 		they also hold for iteration $t+1$.
		The described procedure hence constructs a pure Nash equilibrium for the given game in time polynomial in $|T|$.
\end{proof}
	
	As every player strives to activate his \tasks as early as possible, which is also in the interest of other players whose \tasks depend on them, one may think that the schedule that maximizes welfare is always a PNE. 
However, this is not the case. The ratio of the maximum welfare to the maximum welfare in a PNE is called the \emph{price of stability}.  The following theorem shows that this ratio may be strictly greater than 1.
	
	\begin{theorem}
		Even for uniform rewards, a welfare-maximizing schedule is not necessarily a pure Nash equilibrium.
	\label{thm:pos}
	\end{theorem}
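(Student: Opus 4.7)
The plan is to prove the claim by constructing an explicit ISG with uniform rewards in which some welfare-maximizing schedule admits a strictly profitable unilateral deviation for some player.

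The construction I have in mind uses four players, each holding two tasks: $P_1 = \{u, w\}$, $P_2 = \{b, x\}$, $P_3 = \{y, c\}$, and $P_4 = \{t, z\}$. The depicted dependency edges will be $u \to b$, $u \to c$, $z \to u$, $t \to z$, and $b \to x$; the full graph is the transitive closure. The intra-player edges $b \to x$ in $P_2$ and $t \to z$ in $P_4$ then force, via Lemma~\ref{lem:conflict-free}, the conflict-free best-responses $P_2 = (b, x)$ and $P_4 = (t, z)$. Because $z$ is thereby deployed at time~$2$ and $z \to u$, the activation time satisfies $a(u) \geq 2$ no matter where $P_1$ places $u$. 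The only remaining strategic choices are $P_1 \in \{(u,w),(w,u)\}$ and $P_3 \in \{(y,c),(c,y)\}$, giving only four joint schedules to analyze.

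Verification then proceeds in three steps. First I would enumerate the four schedules, compute activation times and player rewards via the formula $R_i = \sum_{v \in T_i}(q - a(v) + 1)$, and confirm that the schedule $\pi^* = (u,w) + (b,x) + (y,c) + (t,z)$ attains the maximum welfare over all joint schedules. Second I would observe that $P_1$'s best response to $\pi^*_{-1}$ is actually $(w,u)$ rather than $(u,w)$: playing $w$ first yields $a(w) = 1$ instead of $a(w) = 2$ while $a(u) = 2$ is unchanged, so $R_1$ strictly increases. Third I would conclude that $\pi^*$ is a welfare-maximizing schedule that is not a pure Nash equilibrium. The reason $\pi^*$ can still be welfare-maximal despite admitting a profitable deviation is that the deviation symmetrically decreases $R_2$ by the same amount ($b$ can no longer activate at its own deployment time once $u$ has moved), so welfare is simultaneously attained by both assignments of $P_1$.

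The main obstacle is guaranteeing that the external prerequisite forcing $a(u) \geq 2$ cannot be circumvented by rescheduling in any welfare-maximizing assignment. I handle this by building the prerequisite chain $t \to z \to u$ together with the intra-player edge $t \to z$, so that Lemma~\ref{lem:conflict-free} itself locks $z$ into time~$2$; no matter how the other players respond, $u$ cannot activate before time~$2$. With this structural lock in place, the trade-off between $P_1$'s individual utility and the joint reward of $P_2$ becomes unavoidable, and a welfare-maximizing schedule that fails to be a PNE emerges in a small, fully analyzable instance.
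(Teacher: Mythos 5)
Your construction does not work under the paper's model. The dependency graph is transitive (you yourself take the full graph to be the transitive closure), so the in-neighbourhood of $b$ is $\{u,z,t\}$ and that of $c$ is $\{u,z,t\}$. Since $z$ and $t$ both belong to Player~4, one of them is always deployed at time~$2$, hence $a(b)=a(c)=a(x)=a(u)=a(z)=2$ in \emph{every} schedule. Consequently the very lock you install to pin $a(u)=2$ also makes Player~1's placement of $u$ irrelevant to everyone else: when Player~1 deviates from $(u,w)$ to $(w,u)$, $R_2$ does not drop (your claim that ``$b$ can no longer activate at its own deployment time'' is false, because $b$ could never activate at time~$1$ in the first place), so welfare strictly increases from $10$ to $11$. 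Thus your $\pi^*$ is not welfare-maximizing. Worse, the actual welfare maximizers in your instance (those with $P_1=(w,u)$, $P_3=(y,c)$, $P_4=(t,z)$) are all pure Nash equilibria, so the instance witnesses nothing. The trade-off you need --- Player~1 sacrificing own reward to let a task of Player~2 activate earlier --- requires that the external prerequisite of $u$ \emph{not} transitively dominate $b$'s activation, which is exactly what your chain $t\to z\to u$ destroys.

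There is also a conceptual shortfall in the design even if the arithmetic were repaired. You deliberately aim for a ``tie'': the profitable deviation leads to another schedule of equal welfare. That can at best establish the weak literal reading that \emph{some} welfare-maximizing schedule fails to be a PNE, and leaves open that another welfare maximizer is a PNE (as indeed happens in your instance). The paper's proof is of a stronger fact, which is what the surrounding text about the price of stability requires: in its four-player, three-task example the optimum welfare is $23$, while a case analysis shows that \emph{any} PNE forces Player~2's unconstrained task into the first slot, delays the dependent tasks of Players~3 and~4 to the last step, and caps the welfare of every PNE at $22$. So the paper separates the optimum from all equilibria, whereas your approach, by construction, cannot show that the price of stability exceeds~$1$.
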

	
		\begin{proof}
	Consider the following example.
		\begin{center}
		\scalebox{0.80}{
		\begin{tikzpicture}[]
		\node (p1) {$\pi_1$:};
		\node[vertex] (t_1_1) [right of=p1] {$1$};
		\node[vertex] (t_1_2) [right=1 of t_1_1] {$1$};
		\node[vertex] (t_1_3) [right=1 of t_1_2] {$1$};

		\node (p2) [below of=p1] {$\pi_2$:};
		\node[vertex] (t_2_1) [right of=p2] {$1$};
		\node[vertex] (t_2_2) [right=1 of t_2_1] {$1$};
		\node[vertex] (t_2_3) [right=1 of t_2_2] {$1$};
		
		\node (p3) [below of=p2] {$\pi_3$:};
		\node[vertex] (t_3_1) [right of=p3] {$1$};
		\node[vertex] (t_3_2) [right=1 of t_3_1] {$1$};
		\node[vertex] (t_3_3) [right=1 of t_3_2] {$1$};
		
		\node (p4) [below of=p3] {$\pi_4$:};
		\node[vertex] (t_4_1) [right of=p4] {$1$};
		\node[vertex] (t_4_2) [right=1 of t_4_1] {$1$};
		\node[vertex] (t_4_3) [right=1 of t_4_2] {$1$};

		\path[edge] (t_1_1) to (t_1_2);
		\path[edge] (t_1_2) to (t_2_1);
		\path[edge] (t_1_2) to (t_2_2);

		\path[edge] (t_2_1) to (t_3_2);
		\path[edge] (t_2_1) to  (t_4_2);

		\path[edge] (t_2_2) to (t_3_2);
		\path[edge] (t_2_2) to [bend right=65] (t_4_2);
		
		\path[edge] (t_3_2) to (t_3_3);
		\path[edge] (t_4_2) to (t_4_3);
		\end{tikzpicture}
		}
		\end{center}
		The schedule shown is not a Nash equilibrium: if Player 2 shifts the last \task to the first slot, he increases his reward by 1. In fact, any schedule that is a PNE must have Player 2's last \task (in $\pi_2$ as shown) in the first slot as both other services, depending (by transitivity) on the two \tasks of Player 1 cannot activate before the second time step. Hence, one of the remaining two \tasks of Player 2 (the two with dependencies), that other services depend on, will only be deployed in the last time step.  This implies that in any schedule that is a Nash equilibrium, the two services of both Players 3 and 4 that depend on Player 2's services will not activate before the last time step, either. Hence, Players 3 and 4 cannot achieve a reward higher than $3\cdot 1 + 2\cdot 0 + 1 \cdot (1+1) = 5$ each. Even if both other players receive the maximal reward of 6, then the welfare in any Nash equilibrium schedule cannot exceed 22. On the other hand, the schedule shown achieves a total welfare of 23. Hence, no welfare maximizing schedule can be a Nash equilibrium.
	\end{proof}

Since there may be more than one PNE profile in \ISGs with uniform rewards, it is natural to ask how bad the price of anarchy, the ratio of the maximum welfare schedule to the maximum welfare in a PNE, can become.  

	\begin{theorem}
		The price of anarchy of \ISGs with uniform rewards is $\geq \nicefrac{k(q+1)}{(q+2k-1)}$ with $k$ players, $q$ \tasks each.
	\end{theorem}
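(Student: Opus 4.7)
The plan is to exhibit a family of instances realizing the stated ratio exactly. For each pair $(k,q)$, take player~$1$ with \tasks $a_1,\dots,a_q$ and, for every other player $i\in\{2,\dots,k\}$, \tasks $b_{i,1},\dots,b_{i,q}$. Install a single \enquote{bottleneck}: every $b_{i,j}$ has $a_q$ as its unique predecessor, and the graph contains no other edges. The resulting precedence graph is trivially transitive and acyclic, and all rewards are uniform.

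Next, I would identify a pure Nash equilibrium with low welfare, namely the schedule $\pi$ in which player~$1$ places $a_q$ in the last slot (say $a_1,a_2,\dots,a_q$) and every other player uses an arbitrary permutation. Verifying the PNE condition is routine. Player~$1$'s \tasks have no dependencies, so each of them activates the instant it is scheduled and every permutation yields reward $1+2+\cdots+q=q(q+1)/2$, leaving player~$1$ indifferent. For $i\geq 2$, every \task of player~$i$ depends on $a_q$, which activates at time $q$ under $\pi$; consequently each such \task activates at time~$q$ independently of its position, and player~$i$'s reward is exactly $q$ under any permutation. Hence no unilateral deviation strictly improves any player's utility, and the welfare of $\pi$ equals $q(q+1)/2+(k-1)q$.

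Then I would show the optimum welfare is $kq(q+1)/2$. In the schedule where player~$1$ places $a_q$ first, $a_q$ activates at time~$1$, so for every $i\geq 2$ each $b_{i,j}$ activates at time $j$ and every player collects the per-player maximum $q(q+1)/2$; since no player can do better than scheduling one \task per time step with every \task immediately active, this welfare is optimal. Combining,
\[\frac{kq(q+1)/2}{q(q+1)/2+(k-1)q}\;=\;\frac{k(q+1)}{q+2k-1},\]
which yields the claimed lower bound.

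The only subtle step is confirming PNE-ness; the key observation that makes it easy is that $a_q$'s activation is a single bottleneck for all players $i\geq 2$, so shifting it within player~$1$'s slots costs player~$1$ nothing while affecting every other player uniformly — in the profile where $a_q$ sits last, no player benefits from any deviation, and the welfare gap between this PNE and the optimum is exactly what drives the bound.
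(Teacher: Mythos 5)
Your proposal is correct and is essentially the paper's own argument: the same single-bottleneck instance (one task of player~1 preceding all tasks of the other $k-1$ players), the same worst pure Nash equilibrium with the bottleneck scheduled last giving welfare $q(q+1)/2+(k-1)q$, and the same optimal schedule with the bottleneck first giving $k\,q(q+1)/2$, yielding the ratio $k(q+1)/(q+2k-1)$. Your explicit verification that the low-welfare profile is a PNE (player~1 indifferent, players $i\geq 2$ stuck at reward $q$) just spells out what the paper leaves implicit.
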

	\begin{proof}
		Consider the following example.
		\begin{center}
		\scalebox{0.9}{
		\begin{tikzpicture}[]
		\node (p1) {$\pi_1$:};
		\node[vertex] (t_0_1) [right of=p1] {$1$};
		\node[vertex] (t_0_2) [right of=t_0_1] {$1$};
		\node (t_0_3) [right of=t_0_2] {\dots};
		\node[vertex] (t_0_k) [right of=t_0_3] {$1$};
		\node (p2) [below of=p1] {$\pi_2$:};
		\node[vertex] (t_1_1) [right of=p2] {$1$};
		\node[vertex] (t_1_2) [right of=t_1_1] {$1$};
		\node (t_1_3) [right of=t_1_2] {\dots};
		\node[vertex] (t_1_k) [right of=t_1_3] {$1$};
		\path[edge] (t_0_k) to [bend right=10] (t_1_1);
		\path[edge] (t_0_k) to [bend right=10] (t_1_2);
		\path[edge] (t_0_k) to [bend left=10] (t_1_k);
		\node (p3) [below=0.01 of p2] {$\vdots$};
		\node (pn) [below=0.01 of p3] {$\pi_k$:};
		\node[vertex] (t_n_1) [right of=pn] {$1$};
		\node[vertex] (t_n_2) [right of=t_n_1] {$1$};
		\node (t_n_3) [right of=t_n_2] {\dots};
		\node[vertex] (t_n_k) [right of=t_n_3] {$1$};
		\path[edge] (t_0_k) to [bend left=10] (t_n_1);
		\path[edge] (t_0_k) to [bend left=10] (t_n_2);
		\path[edge] (t_0_k) to [bend left=30] (t_n_k);
		\end{tikzpicture}
		}
		\end{center}
		The worst PNE is obtained (as shown) by scheduling player~1's \task, on which all others depend, at the end; as opposed to the PNE achieved when this \task is at the beginning, which is welfare-maximizing. The ratio between the welfares is
$\frac{ k \cdot q(q+1)/2 }{ q(q+1)/2 + (k-1)q}= \frac{k(q+1)}{q+2k-1}$.
	\end{proof}
	If we fix the number of players $k$, the ratio is bounded by $\lim_{q\rightarrow \infty}k(q+1)/(q+2k-1)=k$. Similarly, when fixing the number of \tasks $q$, then $\lim_{k\rightarrow \infty}k(q+1)/(q+2k-1)=(q+1)/2$.  This motivates the following theorem.

\begin{theorem}
	The price of anarchy of \ISG with uniform rewards is at most $(q+1)/2$. 
\end{theorem}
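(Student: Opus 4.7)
The plan is to compare a simple per-player upper bound on any schedule's welfare against a simple per-player lower bound on welfare in \emph{any} PNE; the ratio of these two bounds gives exactly $(q+1)/2$.

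First, I would establish the upper bound on the maximum welfare. Since rewards are uniform and each player has $q$ tasks, the utility of player $i$ in any schedule $\pi$ can be rewritten as $R_i(\pi) = \sum_{v \in T_i} (q - a(v) + 1)$. Because $a(v) \ge \pi_i(v)$ always holds, and $\pi_i$ is a permutation of $\{1,\dots,q\}$, we get
\begin{equation*}
R_i(\pi) \le \sum_{v \in T_i} (q - \pi_i(v) + 1) = \sum_{j=1}^{q}(q-j+1) = \frac{q(q+1)}{2}.
\end{equation*}
Summing over all $k$ players, the welfare of any schedule is at most $k \cdot q(q+1)/2$.

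Next, I would show a lower bound on welfare that holds in every PNE. In fact the bound holds in every schedule: since all tasks are scheduled within $q$ time steps, $a(v) \le q$ for every task $v$, so each task contributes at least $q - a(v) + 1 \ge 1$ to its owner's reward. Hence $R_i(\pi) \ge q$ for every player $i$ and every schedule $\pi$, so the welfare in any PNE (indeed any schedule at all) is at least $kq$. Taking the ratio of the two bounds yields
\begin{equation*}
\frac{\max_\pi \sum_i R_i(\pi)}{\min_{\pi \text{ PNE}} \sum_i R_i(\pi)} \le \frac{k \cdot q(q+1)/2}{kq} = \frac{q+1}{2},
\end{equation*}
which is the claimed bound.

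There is no real obstacle here: the argument is essentially a double counting, and no properties of equilibria are needed beyond the trivial fact that, whatever happens in the PNE, each of the $q$ tasks of each player is at least active at the final time step. The only subtlety worth stating carefully is the identity $R_i(\pi) = \sum_{v \in T_i}(q-a(v)+1)$, which follows directly from swapping the order of summation in the definition of $R_i$; this reformulation is what cleanly drives both the upper and the lower bound.
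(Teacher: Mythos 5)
Your proof is correct and follows essentially the same route as the paper: bound the optimal welfare above by $k\,q(q+1)/2$ (what a precedence-free game would give) and bound the welfare of any schedule, hence any PNE, below by $kq$ (every task is active at least in the final step), then take the ratio. Your version simply makes both bounds explicit from the definition via $R_i(\pi)=\sum_{v\in T_i}(q-a(v)+1)$ and $\pi_i(v)\le a(v)\le q$, which is a slightly more formal rendering of the same argument.
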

\begin{proof}
	The worst PNE profile cannot be worse than the schedule in which all \tasks activate at the last time step $q$, which obtains welfare $k\cdot q$. The maximum-welfare schedule cannot be better than a schedule in a game without any precedence constraints, which obtains welfare $k\cdot q(q+1)/2$. Together, we have: $PoA \leq \frac{k q(q+1)/2}{k q}=\frac{q+1}{2}.$
\end{proof}

\subsection{\ISGs with General Rewards}

	Our results for the general setting are not as positive as our results for the uniform rewards setting.  We show that for the general rewards setting, an \ISG with two players does not always admit a pure Nash equilibrium.

	\begin{theorem}\label{thm:no_nash}
        An \ISG with two players and general rewards does not always admit a pure Nash equilibrium.
	\end{theorem}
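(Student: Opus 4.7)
The plan is to prove the theorem by exhibiting an explicit two-player \ISG instance with general rewards and then verifying, by finite case analysis, that no schedule profile is a pure Nash equilibrium. The first step is to invoke Lemma~\ref{lem:conflict-free}: since any PNE strategy is in particular a best response, I may assume without loss of generality that both players use conflict-free schedules with respect to the dependency subgraph induced on their own \tasks. This restricts the PNE candidate set to the product of linear extensions of the two induced subgraphs, which for a small instance is enumerable.

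The second step is the construction of the instance. The idea is to use rewards of two very different magnitudes: a large reward $M$ placed on a designated \emph{crown jewel} \task of each player, and small rewards (of order one) on the remaining \tasks. The dependency structure will interlock the two players so that each player's crown jewel depends on specific supporting \tasks of the opponent, while each player also controls supporting \tasks that the opponent's crown jewel depends on. By choosing $M$ sufficiently large compared to the supporting rewards and to the number of time steps, the dominant strategic consideration for each player becomes the activation time of their own crown jewel; the supporting rewards then function as tie-breakers that force each player's best response to shift between different conflict-free orderings depending on what the opponent plays. This asymmetry in rewards, which is unavailable in the uniform-reward setting covered by Theorem~\ref{thm:uniform_existence}, is the essential lever the construction exploits.

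The third step is the verification: enumerate every conflict-free profile and, for each, exhibit a unilateral deviation of one player that strictly improves their utility. The target structure is a best-response cycle, analogous to the uniform-reward cycle exhibited earlier in the paper, but with the crucial difference that the reward asymmetry destroys every candidate fixed point. Recall that the uniform cycle still admitted a diagonal PNE; the construction here will pick rewards so that the analogous diagonal profile is no longer stable because one player can strictly improve by front-loading their crown jewel at the cost of a supporting \task, and symmetrically for the other player.

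The main obstacle is designing an instance small enough that the case analysis remains tractable, yet with enough asymmetry that no conflict-free profile survives. Too symmetric a construction will admit a mirror-symmetric PNE; too small a construction (e.g., two \tasks per player) is too constrained to produce a best-response cycle at all. The minimal instance will likely require each player to control three or four \tasks with carefully chosen dependencies, and the bulk of the work lies in the enumeration confirming, profile by profile, a strictly improving unilateral deviation with the large gap $M \gg 1$ guaranteeing that the improvement is genuine regardless of the minor perturbations among supporting \tasks.
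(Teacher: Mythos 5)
There is a genuine gap: the theorem is an existence claim about a counterexample, so the counterexample itself, together with its verification, \emph{is} the proof. Your proposal describes a design philosophy (a high-reward ``crown jewel'' task per player, interlocking dependencies, small tie-breaking rewards, then exhaustive enumeration) but never writes down an instance --- no dependency graph, no concrete rewards, no number of \tasks --- and explicitly defers ``the bulk of the work'' to an enumeration that is not carried out. Until that instance is exhibited and every profile is shown to admit a strictly improving unilateral deviation, nothing has been proved. For comparison, the paper's proof consists precisely of such an explicit instance (two players, four \tasks each, rewards in $\{1,2,3,4\}$ --- no large gap $M$ is needed) together with a short case analysis: it first argues that in \emph{any} best response of either player the reward-$4$ \task must be scheduled as early as its prerequisite structure allows, which collapses the candidate profiles into three configurations, and then shows each configuration triggers a deviation leading into a best-response cycle.

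A second, more technical problem is your opening ``without loss of generality'' step. Lemma~\ref{lem:conflict-free} says that \emph{some} best response satisfies condition \eqref{eq:conflict-free}; it does not say that \emph{every} best response does. A PNE only requires each $\pi_i$ to be a best response to $\pi_{-i}$, so a hypothetical PNE could involve a non-conflict-free best response, and replacing it by a conflict-free one may destroy the other player's best-response property. Hence you cannot restrict the PNE candidate set to products of linear extensions of the induced subgraphs without further argument; to rule out all equilibria you must reason about all best responses (or about activation-time profiles), which is exactly what the paper's case analysis does for its specific instance.
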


\begin{proof}

Consider the the following instance.
\begin{center}
\scalebox{0.8}{
\begin{tikzpicture}[]
\node (p1) {$\pi_1$:};
\node[vertex] (t_0_1) [right of=p1] {$1$};
\node[vertex] (t_0_3) [right of=t_0_1] {$4$};
\node[vertex] (t_0_0) [right of=t_0_3] {$3$};
\node[vertex] (t_0_2) [right of=t_0_0] {$2$};
\node (p2) [below of=p1] {$\pi_2$:};
\node[vertex] (t_1_0) [right of=p2] {$2$};
\node[vertex] (t_1_2) [right of=t_1_0] {$4$};
\node[vertex] (t_1_1) [right of=t_1_2] {$1$};
\node[vertex] (t_1_3) [right of=t_1_1] {$3$};
\path[edge] (t_0_1) to [bend left=10] (t_0_3);
\path[edge] (t_1_0) to [bend left=10] (t_0_0);
\path[edge] (t_1_3) to [bend left=10] (t_0_2);
\path[edge] (t_0_3) to [bend left=10] (t_1_2);
\path[edge] (t_1_1) to [bend left=10] (t_1_3);
\end{tikzpicture}
}
\end{center}
\noindent
Assume this game admits a PNE, any best response of Player 1 must satisfy that \task $4$, being the highest reward \task,  is scheduled immediately after \task $1$.  Therefore, any possible best response of Player 1 has to adopt one of the following schedule configurations:
(\emph{i}) $\pi_1 \in (1,4,*,*)$, (\emph{ii}) $\pi_1 \in (*,1,4,*)$ or (\emph{iii}) $\pi_1 \in (*,*,1,4)$.

In a similar way, \task $4$ of Player 2, for any best response of Player 2, must be scheduled as soon as possible. These observations narrow the number of possible PNE configurations to three cases: \textbf{Case $(i)$}
	Player's 2 best response, given any schedule of the form $\pi_1 \in (1,4,*,*)$ is $\pi_2 = (2,4,1,3)$.  However, such a schedule triggers a best response for Player 1 of $\pi_1=(3,1,4,2)$, which take us to case (ii). \textbf{Case $(ii)$}
	Player's 2 best response, given any schedule of the form $\pi_1 \in (*,1,4,*)$ is $\pi_2 = (1,3,4,2)$.  However, such a schedule triggers a best response for Player 1 of $\pi_1=(1,4,2,3)$, which is an instance of case (i). This leads to a cycle of best responses. \textbf{Case $(iii)$}
	Player's 2 best response, given any schedule of the form $\pi_1 \in (*,*,1,4)$ is $\pi_2 \in \{(2,1,3,4),(1,3,2,4)\}$.  However, such schedules trigger a best response for Player 1 of $\pi_1 = (3,1,4,2)$ if $\pi_2 = (2,1,3,4)$, or $\pi_1 = (1,4,3,2)$ in the other case. Both schedules being an instance of case (ii) or (i), respectively.  Therefore, for any schedule $\pi_1$, there is no schedule $\pi_2$, such that $(\pi_1,\pi_2)$ is a PNE.
\end{proof}

We conjecture that the example in Theorem~\ref{thm:no_nash} is minimal with respect to the number of \tasks and dependencies.
We can embed this example into a 3SAT reduction to show that checking the existence of a PNE is NP-hard.

\begin{theorem}\label{thm:existence}
Deciding whether an ISG with general rewards admits a PNE schedule is 
NP-hard, even when each player has at most 4 \tasks.
\end{theorem}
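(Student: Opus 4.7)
The plan is to reduce from 3SAT, combining a variable gadget with copies of the no-PNE construction of Theorem~\ref{thm:no_nash} as clause gadgets. Given a 3CNF formula $\varphi$ with variables $x_1,\dots,x_n$ and clauses $C_1,\dots,C_m$, I would build an ISG with general rewards in which every player controls at most 4 \tasks, and show that a PNE exists if and only if $\varphi$ is satisfiable.

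First, for each variable $x_i$, introduce a variable player $V_i$ with 4 \tasks whose rewards and internal dependencies are chosen so that, given the schedules available to the clause players, $V_i$ has exactly two best responses $\pi_T^i$ and $\pi_F^i$, corresponding to setting $x_i$ true or false. The two orderings should differ in the positions of a pair of ``marker'' \tasks $t_i$ and $f_i$, which will serve as prerequisites for \tasks in the clause gadgets: $t_i$ is activated early precisely when $V_i$ plays $\pi_T^i$, and $f_i$ is activated early precisely when $V_i$ plays $\pi_F^i$. For each clause $C_j = \ell_{j,1} \vee \ell_{j,2} \vee \ell_{j,3}$, I introduce two clause players $A_j, B_j$ with 4 \tasks each, configured exactly as the two-player instance in the proof of Theorem~\ref{thm:no_nash}. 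For each literal $\ell_{j,k}$, add a precedence edge from the corresponding variable marker ($t_i$ if $\ell_{j,k} = x_i$, or $f_i$ if $\ell_{j,k} = \neg x_i$) into a specific \task of the clause gadget so that, if at least one literal marker is activated early, the activation pattern in $(A_j, B_j)$ is forced into a configuration that stabilizes one of the three cases analyzed in the proof of Theorem~\ref{thm:no_nash}. If all three literal markers are activated late, the clause gadget behaves identically to the no-PNE example of Theorem~\ref{thm:no_nash} and admits no pure equilibrium among $A_j, B_j$.

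The two directions of correctness are then straightforward. For the forward direction, given a satisfying assignment $\alpha$, let each $V_i$ play $\pi_{\alpha(x_i)}^i$; each clause $C_j$ has at least one literal whose marker is activated early, which disarms the cycling of $(A_j, B_j)$ and permits each clause player to pick the stabilizing best response. Since clause players' \tasks are not prerequisites of any variable \task, the joint profile is a PNE. For the reverse, in any PNE each $V_i$ must play one of $\pi_T^i, \pi_F^i$, inducing an assignment $\alpha$; if some $C_j$ were falsified, all three markers feeding its clause gadget would be late, the wiring would leave $(A_j, B_j)$ in precisely the no-PNE situation of Theorem~\ref{thm:no_nash}, and some clause player would have a profitable deviation, contradicting the PNE assumption.

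The main obstacle is implementing the ``disarming'' of the clause gadget within the 4-\task-per-player budget. The reward values in $V_i$, $A_j$, $B_j$ and the placement of the prerequisite edges from markers must be tuned so that (i) the two orderings $\pi_T^i, \pi_F^i$ are the \emph{only} best responses and yield $V_i$ equal utility, robustly across all possible clause-gadget responses; (ii) an early marker shifts exactly the activation time needed to collapse one case of the case analysis in Theorem~\ref{thm:no_nash} into a stable best-response pair; and (iii) a late marker leaves the gadget's dynamics untouched. Keeping each player at 4 \tasks forces the reduction to add these constraints through dependency edges only, matching the tight bound already witnessed by Theorem~\ref{thm:no_nash}.
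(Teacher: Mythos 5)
Your high-level plan (reduce from 3SAT, use copies of the no-PNE instance of Theorem~\ref{thm:no_nash} as clause gadgets, and ``disarm'' a gadget exactly when its clause is satisfied) matches the paper's strategy, but the mechanism you propose for disarming does not work, and this is the crux of the reduction. In this model a precedence edge can only \emph{delay} activation: if a variable marker is activated early, the edge from it into a gadget task imposes a vacuous constraint, so the gadget's best-response dynamics are exactly those of the unconstrained instance of Theorem~\ref{thm:no_nash} --- i.e.\ the gadget remains \emph{armed}, which is the opposite of what you need. Conversely, it is a \emph{late} prerequisite that pacifies the gadget, by pushing activations toward the last time step and making the gadget players (more) indifferent. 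So your claim that an early marker ``collapses one case of the case analysis into a stable best-response pair'' is both unsupported (you never exhibit the rewards or edges that would do this, and it would require redoing the whole case analysis of Theorem~\ref{thm:no_nash} under shifted activation times) and directionally backwards. Your forward direction also silently assumes that the stabilized configuration of $(A_j,B_j)$ exists, which is exactly the point in question.

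The paper resolves this with an intermediate ``release'' \task: each clause player $P_c$ has the three literal-dependent \tasks $c_1,c_2,c_3$ (reward $4$) plus a \task $d_c$ (reward $3$) with no prerequisites, and \emph{all eight} gadget \tasks depend on $d_c$. The rewards are tuned so that $P_c$'s best response puts $d_c$ last precisely when some literal \task of $c$ is activated early (clause satisfied), and puts $d_c$ first when the clause is unsatisfied. When $d_c$ is last, every gadget \task can only activate at the final time step, so the gadget players are completely indifferent and any schedule is a best response --- equilibrium for free, with no re-analysis of the gadget. When $d_c$ is first, the constraint is vacuous and the gadget reproduces the no-PNE dynamics. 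Note also that the paper's variable players need only two uniform-reward \tasks (any order is a best response; the assignment is just read off), so your four-\task variable gadget with exactly two equal-utility best responses is unnecessary machinery. To repair your proof you would need to replace the direct marker-to-gadget wiring with some such satisfaction-controlled release \task (or an equivalent device that converts ``clause satisfied'' into a \emph{delayed} common prerequisite of the gadget).
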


\section{Conclusions}

We have introduced a class of interdependent scheduling games that are motivated by large-scale infrastructure restoration and humanitarian logistics; 
answering many important questions that arise when the players are independent decision makers, including questions of welfare maximization and existence of PNEs.
%
An interesting technical open problem is to determine the complexity of welfare maximization when the number of players is bounded. 
More broadly, there are a number of promising directions for future work including the extension of the model to include cyclic interdependencies \cite{coffrin2012last} or considering other types of manipulation such as adding \tasks or misreporting utilities \cite{ZlotkinR93}.
Also note that approximation algorithms for traditional scheduling settings (with hard dependencies and non-accruing rewards) cannot be directly applied to our model. Hence, another possible avenue of research would be a study of fixed parameter tractability and approximation algorithms for \ISGs.

\clearpage
\section*{Acknowledgments}
Data61/NICTA is funded by the Australian Government through the Department of Communications and the Australian Research Council (ARC) through the ICT Centre of Excellence Program.
Serge Gaspers is the recipient of an ARC Future Fellowship (project number FT140100048) and acknowledges support under the ARC's Discovery Projects funding scheme (project number DP150101134).
Dominik Peters is supported by EPSRC.

\bibliographystyle{named}

\clearpage
\appendix

\section{Full Version of Theorem~\ref{thm:topological}}
\begin{theorem*}
	For one player and general rewards, every welfare-maximizing schedule is a conflict-free schedule.
\end{theorem*}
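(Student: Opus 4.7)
The plan is to prove the contrapositive by constructing, from any schedule $\pi$ that contains a conflict, a strictly better schedule $\pi^*$. The construction is exactly the shift used in the proof of Lemma~\ref{lem:conflict-free}, but in the single-player setting it produces a strict, not merely weak, welfare improvement, which is what allows us to upgrade the conclusion from ``there exists a conflict-free optimum'' to ``every optimum is conflict-free.''

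First I pick the conflict to shift. Since $\pi$ is assumed not conflict-free, I select an edge $(u,v) \in E$ with $\pi(u) > \pi(v)$ such that no other prerequisite $u'$ of $v$ satisfies $\pi(u') > \pi(u)$; such a latest-scheduled prerequisite of $v$ exists by finiteness. Define $\pi^*$ by moving $v$ to position $\pi(u)$, shifting every task currently at positions $\pi(v)+1, \dots, \pi(u)$ one step earlier, and leaving every other task in place. Reusing the case analysis of Lemma~\ref{lem:conflict-free}, $a^*(v)=a(v)=\pi(u)$, the activation times of all tasks that (transitively) depend on $v$ are unchanged, and every task that was shifted one step earlier has activation time no greater than before (its deployment time dropped by one, and any of its prerequisites lying in the window $[\pi(v)+1,\pi(u)]$ dropped by one as well). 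Thus no task loses reward under $\pi^*$.

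Next I extract the strict gain by focusing on $u$ itself. By the maximality of $\pi(u)$ among prerequisites of $v$ together with transitivity of $G$, every prerequisite of $u$ has $\pi$-position at most $\pi(u)$: a prerequisite of $u$ at a position strictly greater than $\pi(u)$ would, by transitivity, be a prerequisite of $v$ at position $>\pi(u)$, violating the choice of $u$. Hence $a(u)=\pi(u)$. After the shift, $u$ occupies position $\pi(u)-1$, and each prerequisite of $u$ is either unchanged (if originally at position $\leq \pi(v)$) or shifted one earlier (if originally in $[\pi(v)+1,\pi(u)]$); in either case its new position is at most $\pi(u)-1$. Therefore $a^*(u)=\pi(u)-1=a(u)-1$, so $u$ is active for one additional time step in $\pi^*$, contributing an extra $r(u)$ units of welfare. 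Under the positivity assumption on rewards this yields $R(\pi^*) > R(\pi)$, contradicting welfare-maximality of $\pi$.

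The main technical obstacle is the activation-time bookkeeping for $u$ in the previous paragraph: one has to exclude the scenario where $u$ has a prerequisite scheduled strictly after $\pi(u)$, and this is exactly where the maximality choice of $u$ combined with transitivity of $G$ is essential. If zero-reward tasks were allowed, the same construction still never \emph{decreases} welfare, so repeated application produces a conflict-free schedule with identical welfare; this weaker ``existence'' form of the statement is what is actually needed for the reduction used in Theorem~\ref{th:one-player-np}.
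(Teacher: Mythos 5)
Your proof is correct and follows essentially the same route as the paper's own argument: it reuses the shift construction from the proof of Lemma~\ref{lem:conflict-free} and observes that in the single-player case the maximality of $u$ among $v$'s prerequisites forces $a(u)=\pi(u)$, so moving $u$ one slot earlier strictly increases welfare by $r(u)$, contradicting optimality. Your closing caveat about zero-reward tasks is a fair observation (the ``every optimum'' form implicitly needs strictly positive rewards), and as you note the no-loss version of the swap already suffices for the use of this theorem in Theorem~\ref{th:one-player-np}.
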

\begin{proof}
	This follows by an observation about the proof of Lemma~\ref{lem:conflict-free}: In the one-player case, \task $u$ activates immediately under schedule $\pi'_i$ by its maximality among dependencies for which $v$ has to wait. Hence, it also activates immediately under schedule $\pi^*_i$, which is one time step earlier than under schedule $\pi'_i$. Schedule $\pi^*_i$ hence generates strictly more reward than schedule $\pi'_i$.
\end{proof}

\section{Full Version of Theorem~\ref{thm:conflictisgood}}
\begin{theorem*}
	Even if a conflict-free schedule exists, the welfare-maximizing schedule might not be conflict-free.
\end{theorem*}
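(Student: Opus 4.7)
The plan is to construct an explicit counterexample following the hint given immediately before the statement, namely by modifying the instance from Example~\ref{ex:basic}. Concretely, I would take that two-player instance (Player~$1$ with rewards $(10,1,1)$, Player~$2$ with rewards $(1,100,100)$, and the dependencies depicted in $\pi$) and extend the dependency graph so that Player~$1$'s reward-$10$ \task is a prerequisite of every other \task. The instance stays within two players with three \tasks each, which keeps the bookkeeping manageable.

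The key steps proceed in three stages. First, I would characterize all conflict-free schedules on this modified instance. Requiring each \task to activate when deployed, together with the new edges out of the reward-$10$ \task, forces Player~$1$ to place that \task at time $t=1$; the intra-player precedence then forces Player~$1$'s middle reward-$1$ \task into slot~$2$ and the right reward-$1$ \task into slot~$3$. For Player~$2$, conflict-freeness forces the reward-$1$ \task to slot~$1$ (the reward-$100$ \tasks cannot be activated before Player~$1$'s middle reward-$1$ \task is deployed at $t=2$), and the two reward-$100$ \tasks to slots~$2$ and~$3$ in some order. A direct computation yields a common welfare value $W_{\text{CF}}$ for all these schedules. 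Second, I would exhibit a non-conflict-free schedule in which Player~$2$ instead deploys the two reward-$100$ \tasks at $t=1$ and $t=2$ and the reward-$1$ \task at $t=3$; both reward-$100$ \tasks still activate only at $t=2$ (so conflict-freeness fails at $t=1$), but in the conflict-free schedules one of them activates only at $t=3$, which is exactly the advantage that is lost. Computing welfare shows this schedule attains some $W^* > W_{\text{CF}}$. Third, since the welfare-maximizing schedule attains welfare at least $W^*$, it cannot coincide with any conflict-free schedule, giving the desired conclusion.

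The main obstacle is the case analysis for the first stage: the statement as phrased is about welfare-maximizing schedules in general, so one needs to be confident that \emph{no} conflict-free schedule has been overlooked that might match $W^*$. This is the only place where care is required, and it is kept short by the observation that the added dependencies make conflict-freeness very restrictive — Player~$1$'s ordering is forced outright, and Player~$2$'s is determined up to swapping the two reward-$100$ \tasks, which yields identical welfare by symmetry of their rewards. Once that enumeration is done, the strict inequality $W^* > W_{\text{CF}}$ is just arithmetic, and the theorem follows.
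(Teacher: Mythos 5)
Your proposal is correct and follows essentially the same route as the paper: the paper's own proof (in the appendix) exhibits exactly this kind of modified Example~\ref{ex:basic} instance — a chain for Player~1 making every task depend on the first one — and compares the unique (up to swapping the two reward-$100$ tasks) conflict-free schedule against the non-conflict-free one where Player~2 front-loads the reward-$100$ tasks. Your added care in enumerating all conflict-free schedules is a welcome explicit step, but it is the same argument, with arithmetic $W_{\text{CF}}=336$ versus $W^*=434$ in your reward variant (the paper's variant uses unit rewards for Player~1, giving $309$ versus $407$).
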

\begin{proof}
Consider the following example.

\begin{minipage}{0.47\linewidth}
\centering
\scalebox{0.9}{
\begin{tikzpicture}[]
\node (p1) {$\pi^A_1$:};
\node[vertex] (t_1_a) [right of=p1] {$1$};
\node[vertex] (t_2_a) [right of=t_1_a] {$1$};
\node[vertex] (t_3_a) [right of=t_2_a] {$1$};

\node (p2) [below of=p1] {$\pi^A_2$:};
\node[vertex] (t_1_b) [right of=p2] {$1$};
\node[vertex] (t_2_b) [right of=t_1_b] {$100$};
\node[vertex] (t_3_b) [right of=t_2_b] {$100$};

\path[edge] (t_1_a) to [bend left=10] (t_2_a);
\path[edge] (t_2_a) to [bend left=10] (t_3_a);

\path[edge] (t_1_a) to [bend left=10] (t_1_b);
\path[edge] (t_2_a) to [bend left=10] (t_2_b);
\path[edge] (t_2_a) to [bend left=10] (t_3_b);
\end{tikzpicture}
}

$R(\pi^A) = 309$
\end{minipage}
\hfill
\begin{minipage}{0.47\linewidth}
\centering
\scalebox{0.9}{
\begin{tikzpicture}[]
\node (p1) {$\pi^B_1$:};
\node[vertex] (t_1_a) [right of=p1] {$1$};
\node[vertex] (t_2_a) [right of=t_1_a] {$1$};
\node[vertex] (t_3_a) [right of=t_2_a] {$1$};

\node (p2) [below of=p1] {$\pi^B_2$:};
\node[vertex] (t_1_b) [right of=t_2_b] {$1$};
\node[vertex] (t_2_b) [right of=p2] {$100$};
\node[vertex] (t_3_b) [right of=t_2_b] {$100$};

\path[edge] (t_1_a) to [bend left=10] (t_2_a);
\path[edge] (t_2_a) to [bend left=10] (t_3_a);

\path[edge] (t_1_a) to [bend left=5] (t_1_b);
\path[edge] (t_2_a) to [bend left=10] (t_2_b);
\path[edge] (t_2_a) to [bend left=10] (t_3_b);
\end{tikzpicture}
}

$R(\pi^B) = 407$
\end{minipage}

\smallskip
The schedule on the left is conflict-free while the one on the right has a conflict.  Despite the conflict, the right schedule has higher welfare; the two \tasks with reward 100 become active simultaneously in step two, providing more utility to Player 2 and more welfare. 
\end{proof}	

\section{Full Version of Theorem~\ref{th:one-player-np}}

\begin{theorem*}
	For general rewards, \textsc{ISG Welfare} is NP-complete even for a single player.
\end{theorem*}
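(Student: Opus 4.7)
The plan is to reduce from the classical NP-hard problem \textsc{Single Machine Scheduling with Precedence Constraints} $1|\textit{prec}, p_j{=}1|\sum w_j C_j$ \citep{lenstra1978complexity}: given jobs with unit processing times, weights $w_j \geq 1$, a partial order $P$ on the jobs, and an integer $K$, decide whether there is a linear extension of $P$ achieving $\sum_j w_j C_j \leq K$. Membership in NP is immediate, since a schedule $\pi$ is a polynomial-size certificate and the induced welfare can be computed in polynomial time by evaluating each activation time $a(v)$.

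Given such an instance, I would build a single-player \ISG with $T_1$ equal to the job set, dependency graph $G$ equal to the transitive closure of $P$ (which is acyclic and transitive, as required by the model), and rewards $r(v) := w_v$. I would then set the welfare threshold to $w := (q+1)\sum_j w_j - K$, where $q = |T_1|$. The correctness of this reduction is the place where Theorem~\ref{thm:topological} does the real work: it guarantees that every welfare-maximizing schedule is conflict-free, so in any optimal schedule we have $a(v) = \pi(v)$ for every task $v$. Consequently the player's utility equals
\begin{equation*}
\sum_{v \in T_1} r(v)\bigl(q - \pi(v) + 1\bigr) \;=\; (q+1)\sum_{v} w_v \;-\; \sum_{v} w_v\,\pi(v),
\end{equation*}
and since processing times are unit, $C_v = \pi(v)$ for the source instance. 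Hence a welfare of at least $w$ in the \ISG corresponds exactly to a linear extension of $P$ achieving $\sum_v w_v C_v \leq K$, and vice versa.

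The one subtlety that needs the \emph{adjustment of rewards} mentioned in the excerpt is ensuring that the NP-hardness reduction targets only conflict-free schedules even under the \emph{max} objective (rather than the \emph{min} objective of the source problem): multiplying by $-1$ and adding the constant $(q+1)\sum_v w_v$ converts the minimization into the required maximization while keeping rewards non-negative, as the model requires $r(v) \geq 0$. If $w_v < 0$ ever occurs in the source formulation, a uniform shift $r(v) := w_v + M$ by a sufficiently large constant $M$ preserves the ordering of linear extensions up to an additive constant that can be absorbed into the threshold.

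The main obstacle is not the reduction itself but the appeal to Theorem~\ref{thm:topological}: without it, one would have to argue separately that no welfare-maximizing schedule benefits from scheduling a task $v$ before a predecessor $u \in T_1$, which in the general-rewards single-player case is precisely what that theorem establishes. Once this is in hand, the rest of the argument is a bookkeeping translation between the two objective functions, and polynomiality of the reduction is clear because $G$ is obtained from $P$ by a single transitive closure computation.
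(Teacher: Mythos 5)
Your reduction is correct and is essentially the paper's own proof: the same source problem ($1|\textit{prec},p_j{=}1|\sum w_j C_j$), the same construction (tasks $=$ jobs, $r(v)=w_v$, same precedence graph), the same threshold $(q+1)\sum_v w_v - K$, and the same appeal to Theorem~\ref{thm:topological} to restrict attention to conflict-free schedules so that $a(v)=\pi(v)=C_v$. The only superfluous part is the worry about negative weights, which does not arise in the source problem.
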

\begin{proof}
 We give a reduction from the NP-hard problem \textsc{Single machine weighted completion time} \citep{lenstra1978complexity}:  given a set of jobs $J_i \in J$ with each having weight $w_i$, processing time $p_i=1$, and precedence constraints where $i\prec j$ means $J_j$ cannot be scheduled before $J_i$, and integer $k$, is there an ordering of the jobs such that $\sum_{i\in J} w_i C_i \leq k$ where $C_i$ is the completion time of $J_i$?

	
	For each job $J_i\in J$, create \task $t_i$ with reward $r_i=w_i$ and consider the same precedence graph as the one given for jobs. We set $w= (|J|+1) \sum_{i \in J}  w_i - k$.
	
	  By Theorem~\ref{thm:topological}, without loss of generality, we can assume that any schedule for \ISGs  with one player are conflict-free schedules. 
	  It remains to prove that there is an ordering $\pi$ of jobs with a weighted completion time of at most $k$ if and only if the ISG has a conflict-free schedule $\pi'$ with $R(\pi') \geq w$. 
	  
	  Let $\pi=\pi'$, then $C_i$ is the completion time of both, job $J_i$ and \task $t_i$ given ordering $\pi$. Given that $\pi$ is a conflict-free schedule, the contribution of $t_i$ to the objective function is $\left( |T|+1- C_i \right) r_i$. Thus, $R(\pi)= \sum_{i \in T } \left( |T|+1- C_i \right) r_i =  (|T|+1)\sum_{i \in T }r_i - \sum_{i \in T }  r_i C_i$. But, $\sum_{i \in T }  r_i C_i = \sum_{i \in J }  w_i C_i $, which corresponds to the weighted completion time of ordering $\pi$. Therefore, $R(\pi) \geq w \Leftrightarrow \sum_{i \in J }  w_i C_i \leq k$, which concludes the proof. 
\end{proof}

\section{Full Version of Theorem~\ref{thm:uniform_existence}}

	\begin{theorem*}
		Any \ISG with uniform rewards admits a pure Nash equilibrium which can be computed in polynomial time.
	\end{theorem*}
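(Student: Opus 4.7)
The plan is to construct, iteratively, a schedule $\pi$ such that every $\pi_i$ is simultaneously a best response to $\pi_{-i}$. The idea is to mimic, in parallel across players, the single-player greedy best-response algorithm from Theorem~\ref{th:br-uniform}: at each step append to some player's partial schedule a task with no intra-player predecessor whose current lower bound on activation time is as small as possible. The subtlety absent from the single-player case is that the relevant lower bound for a task depends on when its predecessors owned by \emph{other} players are scheduled, and those have not yet been fully determined. To deal with this I would maintain, at every iteration $t$, the uniform lower-bound function $\bar\eta^{(t)}$ on activation times defined in the excerpt: if all of $v$'s intra-player predecessors are already scheduled, $\bar\eta_i^{(t)}(v)$ is their maximum activation time; otherwise it is $\alpha_i^{(t)}+|N_i^-(v)\setminus T_i^{(t)}|$, the earliest instant at which $v$ could be activated even if player $i$ devotes all remaining slots to $v$'s prerequisites.

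The argument would proceed by induction on $t$, with the following invariants after iteration $t$: (I1) for every player $i$, the scheduled tasks $T_i^{(t)}$ already satisfy the intra-player precedence condition~\eqref{eq:conflict-free}; (I2) no edge of $G$ runs from an unscheduled task back to a scheduled task of the same player; and (I3) for every player $i$, every $w\in T_i^{(t)}$, and every unscheduled $v\in T_i\setminus T_i^{(t)}$, one has $\bar\eta^{(t)}(w)\le \bar\eta^{(t)}(v)$. At iteration $t+1$ the algorithm selects a task $v^*$ minimising $\bar\eta^{(t)}$ among unscheduled tasks with no intra-player incoming edge, and appends it to the owner's partial schedule at position $\alpha_i^{(t)}+1$. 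Existence of $v^*$ follows from transitivity of $G$ together with the case definition of $\bar\eta^{(t)}$: if $v^*$ had an intra-player predecessor $w$, then $w$ would itself be eligible with $\bar\eta^{(t)}(w)\le \bar\eta^{(t)}(v^*)$.

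The main obstacle is verifying that (I1)--(I3) survive the insertion of $v^*$; this is where the proof has to work. (I1) is immediate because $v^*$ has no intra-player predecessor in $T_i$. For (I2) and (I3) the delicate point is that placing $v^*$ may strictly decrease $\bar\eta^{(t+1)}(v)$ for tasks $v$ that depend on $v^*$, so one must rule out that such a decrease pushes any unscheduled task of some other player $j$ below one of $j$'s already-scheduled tasks, and that it creates no new back edge of the form (I2). The key tools here will be the minimality of $\bar\eta^{(t)}(v^*)$ among admissible tasks and the transitivity of the dependency graph: any $v$ whose $\bar\eta$-value drops because $v^*$ is now scheduled must itself depend (transitively) on $v^*$, which I would use to show that its bound still dominates those of the tasks scheduled earlier for its owner. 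I expect a small case distinction, according to whether the decrease stems from $v^*$ becoming an activation bottleneck or merely reducing the leftover prerequisite count, to close the argument.

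Once (I1)--(I3) hold at termination, the schedule $\pi_i$ produced for each player $i$ is precisely what the algorithm of Theorem~\ref{th:br-uniform} would output for player $i$ against the now-fixed $\pi_{-i}$, so $\pi_i$ is a best response to $\pi_{-i}$ for every $i$, and $\pi$ is a pure Nash equilibrium. For the running time, there are exactly $|T|$ iterations, and each iteration requires recomputing $\bar\eta^{(t)}$ and scanning eligible tasks, which is straightforward bookkeeping in time polynomial in $|T|$; hence the whole construction runs in polynomial time.
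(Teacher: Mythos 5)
Your framework (the lower bounds $\bar\eta^{(t)}$, induction over iterations, invariants (I1)--(I3)) matches the paper's, but your algorithm differs from the paper's in one crucial point, and that difference is a genuine gap: you schedule only the single task $v^*$ in each iteration, whereas the paper schedules $v^*$ \emph{together with all of its not-yet-scheduled prerequisites, including those owned by other players}, in the same iteration. This batching is what makes the lower bound tight ($a(v^*)$ actually equals $\bar\eta^{(t)}(v^*)$ once $v^*$ is placed), keeps $a(w)$ well-defined for every scheduled task $w$ (note that the first case of your definition of $\bar\eta^{(t)}_i$ uses $a(w)$, which is undetermined when a scheduled $w$ still has unscheduled cross-player prerequisites), and is what allows ``$\bar\eta$ minimal at scheduling time'' to be converted into ``$\eta$ minimal with respect to the \emph{final} $\pi_{-i}$'', which is what the best-response characterization behind Theorem~\ref{th:br-uniform} requires.

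Without the batching, your invariants do not imply a Nash equilibrium. Concretely: two players, uniform rewards, $T_1=\{a_1,a_2\}$, $T_2=\{b_1,b_2\}$, and the single dependency $(a_2,b_1)$. Initially every task has $\bar\eta^{(0)}=1$ and no intra-player predecessor, so your rule permits choosing $b_1$ first (slot $1$ of player $2$), then $a_1$ (slot $1$ of player $1$), then $a_2$ and $b_2$ in slot $2$. Each chosen task minimizes $\bar\eta$ among its owner's unscheduled tasks at the moment it is placed, and (I1)--(I3) as you state them hold throughout (the edge $(a_2,b_1)$ is cross-player, so (I2) is never triggered, and (I3) compares $a(a_1)=1$ against $\bar\eta(a_2)=2$, etc.). Yet the output is not a PNE: player $2$ earns $1+1=2$ but earns $2+1=3$ by swapping to $(b_2,b_1)$, since $b_1$ cannot activate before time $2$ under player $1$'s final schedule $(a_1,a_2)$. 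In the paper's procedure this cannot happen because the moment $b_1$ is selected, its pending prerequisite $a_2$ is forced into player $1$'s next slot in the same iteration. So your final step --- ``the schedule produced is exactly what the greedy best-response algorithm would output against the fixed $\pi_{-i}$'' --- does not go through as written; you would need either the paper's batched scheduling of $v^*$ with its unscheduled prerequisites, or an explicit priority/tie-breaking rule that simulates it, together with a strengthened invariant relating $\bar\eta$ at scheduling time to the true $\eta$ induced by the completed schedule.
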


	\begin{proof}
				We iteratively construct a schedule in a way which guarantees that every player's schedule is a best response.

				Let 
					$N_i^-(v) := (N_G^-(v) \cup \{v\}) \cap T_i$
				denote those \tasks controlled by player $i$ that $v$ depends on.
				Denote by $T_i^{(t)}$ the set of \tasks of player $i$ already scheduled before iteration $t$. Let $\alpha_i^{(t)} := |T_i^{(t)}|$ denote the number of such \tasks.
				In every iteration, we will choose a \task and schedule it together with all other (remaining) \tasks that it depends on. This means that for a \task $v \in T_i^{(t)}$, $a(v)$ is well-defined during iteration $t$.
				We can therefore define
				\begin{equation*}
					\bar\eta^{(t)}_i(v) := \begin{cases}
												\max_{w \in N_i^-(v)} a(w), & N_i^-(v) \setminus T_i^{(t)} = \emptyset\\
												\alpha_i^{(t)} + |N_i^-(v) \setminus T_i^{(t)}|, &\text{else}
											\end{cases}
				\end{equation*}
				In particular, observe that if $v$ is controlled by player $i$ and $v \notin T_i^{(t)}$, the second case always applies (as $v \in N_i^-(v)$).

				Furthermore, we define $\bar\eta^{(t)}(v) := \max_{i \in N} \bar\eta^{(t)}_i(v)$ which represents a tight lower bound for the activation time of $v$ in any schedule which is a \enquote{completion} of the partial schedule from iteration $t$ (achieved if all prerequisites are scheduled immediately as the next \tasks). Note that $\bar\eta^{(t)}(v)$ can hence only increase from one iteration to the next and that it reaches the value $a(v)$ as soon as \task $v$ and all its predecessors are scheduled and is constant after that.
		
				By Theorem~\ref{th:br-uniform}, player $i$'s schedule $\pi_i$ is a best response if
				it satisfies condition \eqref{eq:conflict-free} from Lemma~\ref{lem:conflict-free} and for all $v \in T_i$, $\eta(v)$, as defined in Theorem~\ref{th:br-uniform}, is minimal among all \tasks from the set $\{w \in T_i | \pi_i(w) \geq \pi_i(v)\}$.
			
				We will show instead that
				for every iteration $t$ and player $i$, the condition \eqref{eq:conflict-free} from Lemma~\ref{th:br-uniform} holds for all \tasks $v,w \in T_i^{(t)}$ and if $v \in T_i^{(t)}\setminus T_i^{(t-1)}$, then $\bar\eta^{(t-1)}(v)$ is minimal among all \tasks from the set $T_i \setminus T_i^{(t-1)}$.
				To see that this condition is also sufficient for $\pi_i$ being a best response, observe the following: While it may happen for a player $i^*$ and $v,v' \in T_{i^*}$ that $\bar\eta^{(t)}(v)$ is minimal among all \tasks from the set $\{w \in T_i^{(t)} | \pi_i(w) \geq \pi_i(v)\}$ but $\eta(v) > \eta(v')$, this can only occur if for both \tasks the maximum in the definition of $\bar\eta$ is assumed for $i=i^*$ as well as $N_i^-(v) \setminus T_i^{(t)} 0 \{v\}$ and $N_i^-(v') \setminus T_i^{(t)} 0 \{v'\}$.
				This however means that both $v$ and $w$ are equivalent at this point in that they both activate immediately after being deployed.
		
				Furthermore, for every iteration $t$ and \tasks $v \in T_i\setminus T_i^{(t)}$ and $w \in T_i^{(t)}$, we show that $(v,w) \notin E_i$ and $\bar\eta^{(t)}(w) \leq \bar\eta^{(t)}(v)$. This yields that every player's schedule is a best response to the other players' schedules and hence the schedule is in a pure Nash equilibrium.
		
				Assume that the above conditions are satisfied for iteration $t$ and proceed in the following way: Choose a \task $v^*$ that minimizes $\bar\eta^{(t)}$ over all \tasks not yet scheduled and that has no incoming edges from \tasks belonging to the same player. Such a \task must exist, as if $(w,v^*) \in E$ for some \task $w$, then $\bar\eta^{(t)}(w) \leq \bar\eta^{(t)}(v^*)$. Let $i$ be the player such that $v^* \in T_i$.
				If $v^*$ has no incoming edges from any of the \tasks not yet scheduled, then scheduling $v^*$ as the next \task of player $i$ satisfies the best-response criterion, no matter the ordering of the unscheduled \tasks.
		
				Hence, suppose that $v^*$ depends on some other \tasks not yet scheduled.
				Denote this set of \tasks by $S$. By induction, scheduling all \tasks in $S$ (respecting the ordering required by edges in $E_i$ if necessary) satisfies condition \eqref{eq:conflict-free} for all players $i$ and $v,w \in T_i^{(t+1)}$.
				Furthermore, note that for every $w \in S$, $\bar\eta^{(t)}(w)=\bar\eta^{(t)}(v^*)$ by minimality of $v^*$ and the dependency of $v^*$ on $w$, thus $\bar\eta^{(t)}(w)=\bar\eta^{(t)}(v^*)$.
				Hence for every player $i$, if $v \in T_i^{(t+1)}\setminus T_i^{(t)}(\subseteq S \cup \{v^*\})$, then $\bar\eta^{(t)}(v)$ is minimal among all \tasks from the set $T_i \setminus T_i^{(t)}$.
		
				Finally, the criteria for every $v \in T_i \setminus T_i^{(t+1)}$ and $w \in T_i^{(t+1)}$ are satisfied as well:
				For every $v \notin S$ and $w \in S$, $(v,w) \notin E_i$ as otherwise $w\in S$. Furthermore, $\bar\eta^{(t+1)}(w) = \bar\eta^{(t)}(w) = \bar\eta^{(t)}(v^*) \leq \bar\eta^{(t)}(v) \leq \bar\eta^{(t+1)}(v)$ where the first equality holds because $w$ and all its dependencies are scheduled in iteration $t$, the second equality was shown above and the inequalities follows by minimality of $v^*$ and monotonicity of $\bar\eta^{(t)}(v)$ in t.
		
				The described procedure hence constructs a pure Nash equilibrium for the given game in time polynomial in $|T|$.
	\end{proof}

\section{Full Version of Theorem~\ref{thm:existence}}

\begin{theorem*}
Deciding whether there exists a PNE schedule is 
NP-hard, even when each player has at most 4 \tasks.
\end{theorem*}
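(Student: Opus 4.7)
The plan is a polynomial reduction from \textsc{3SAT}. Given a 3CNF formula $\phi$ over variables $x_1,\dots,x_n$ and clauses $c_1,\dots,c_m$, we construct an ISG $I_\phi$ with general rewards and at most four \tasks per player such that $I_\phi$ has a PNE if and only if $\phi$ is satisfiable. The central idea is to embed the two-player no-PNE instance from Theorem~\ref{thm:no_nash} as a ``trouble gadget'' whose cyclic best-response behaviour is broken precisely when a satisfying assignment to $\phi$ is present in the schedule.

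The construction has three components. First, for every variable $x_i$ introduce a variable player $P_{x_i}$ with four \tasks so that, as in the reduction of Theorem~\ref{th:welfare-hard-2tasks-each}, the relative scheduling of two distinguished \tasks encodes a truth value $\alpha(x_i)\in\{0,1\}$; reward weights are chosen so that both orderings are permissible best responses in isolation. Second, for every clause $c_j$ introduce a clause player $P_{c_j}$ with up to four \tasks, connected by precedence arcs to the relevant literal \tasks of the variable players, so that $P_{c_j}$ admits a best response \emph{compatible with the rest of the construction} if and only if at least one literal of $c_j$ is deployed ``early'' under the current encoded assignment. Third, include one copy of the no-PNE gadget from Theorem~\ref{thm:no_nash}, and add outgoing edges from an auxiliary ``satisfaction witness'' \task (controlled by one of the gadget's two players) to a single \task of each clause player; the rewards in the trouble gadget are rescaled so that the witness \task is best scheduled early exactly when every clause player is simultaneously in a best response, which in turn is only possible under a satisfying assignment.

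For the forward direction, given a satisfying assignment $\alpha$ I would build a schedule by (a) setting each variable player according to $\alpha$, (b) placing each clause player in the unique best response enabled by a satisfied literal, and (c) scheduling the trouble gadget using its ``unlocked'' configuration where the witness \task is deployed first, thus sidestepping the cycle analysed in Theorem~\ref{thm:no_nash}. A case analysis parallel to Theorem~\ref{thm:no_nash} shows that no deviation is profitable. For the reverse direction, I argue contrapositively: if $\phi$ is unsatisfiable, then in every schedule some clause $c_j$ has all literals late, so $P_{c_j}$'s only best response forces the witness \task into its ``late'' slot, at which point the trouble gadget degenerates exactly to the four configurations of Theorem~\ref{thm:no_nash} and the three-case argument given there rules out every PNE.

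The main obstacle will be calibrating the rewards of the trouble gadget together with the clause-gadget dependencies so that two properties hold simultaneously: (i) in the presence of a satisfying assignment, the witness \task's alternative early slot is strictly preferred, converting the no-PNE instance into a PNE; and (ii) under any unsatisfying assignment, the late-witness case analysis of Theorem~\ref{thm:no_nash} goes through verbatim, untouched by the newly added edges. Tightness of the four-\task bound per player is forced by the gadget of Theorem~\ref{thm:no_nash} itself, so the variable and clause gadgets must be engineered to respect this bound as well; this is the delicate combinatorial step, and the weights used in Theorem~\ref{thm:no_nash} give enough slack to carry it out.
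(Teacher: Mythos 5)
There is a genuine gap, and it lies in the direction of the coupling between the no-PNE gadget and the clause structure. You attach only \emph{outgoing} arcs from a ``satisfaction witness'' \task owned by a gadget player to the clause players' \tasks. But a player's utility in an \ISG depends only on his own schedule and on the predecessors of his own \tasks; with no arcs \emph{into} the gadget, the two gadget players' payoffs and best responses are completely independent of the variable and clause players and of whether $\phi$ is satisfiable. Consequently your claim that ``the witness \task is best scheduled early exactly when every clause player is simultaneously in a best response'' cannot be realized by any rescaling of the gadget's internal rewards: either the rescaled gadget has a PNE on its own (and then it has one in the unsatisfiable case too, killing the reverse direction), or it has none (and then the whole game never has a PNE, killing the forward direction). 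The same confusion appears in your reverse direction, where you say $P_{c_j}$'s best response ``forces the witness \task into its late slot'' --- $P_{c_j}$ cannot force anything about a \task it does not control and that does not depend on $P_{c_j}$'s \tasks.

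The paper's construction inverts exactly this arrow and thereby makes the gadget's equilibrium behaviour conditional. Each clause $c$ gets its \emph{own} copy of the gadget of Theorem~\ref{thm:no_nash}, and the clause player $P_c$ owns a low-reward \task $d_c$ (rewards $r(d_c)=3$, $r(c_1)=r(c_2)=r(c_3)=4$) with arcs \emph{from $d_c$ into all eight gadget \tasks}; variable players need only two \tasks $\{x,\neg x\}$. If the assignment encoded by the variable players satisfies $c$, then scheduling $d_c$ last is a best response for $P_c$, so every gadget \task activates only at the final time step, the gadget players are indifferent among all schedules, and any gadget schedule completes a PNE. If $c$ is unsatisfied, none of $c_1,c_2,c_3$ can activate before step two, and a reward computation shows every best response of $P_c$ puts $d_c$ first; then the gadget is unconstrained, reduces verbatim to the instance of Theorem~\ref{thm:no_nash}, and no PNE exists. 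To repair your proof you would need to reproduce this ``freeze or unleash'' mechanism, i.e., route dependencies from a clause-controlled \task into the gadget rather than out of it (and, if you insist on a single shared gadget rather than one per clause, additionally solve the nontrivial problem of aggregating all clauses' satisfaction into the timing of one \task within the four-\task-per-player budget).
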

\begin{proof}
 We give a reduction from the NP-hard problem \textsc{3SAT} \citep{karpreducibility}: given a CNF formula $F$ where each clause contains exactly 3 literals, is there an assignment to the variables of $F$ such that all clauses are satisfied?

			 
For each variable $x$ in $F$, create a player $P_x$ with \tasks $T_x =\{x,\neg x\}$. Both \tasks have the same reward $r(x)=r(\neg x)=1$. For each clause $c$ in $F$, create a player $P_c$ with \tasks $T_c= \{c_1,c_2,c_3,d_c\}$ and set rewards to be 
$r(d_c)=3$ and $r(c_1)=r(c_2)=r(c_3)=4$. For each clause $c$, we create a gadget $G_c$ corresponding to a copy of the \ISG from Theorem \ref{thm:no_nash} which admits no PNE and consists of 2 players with 4 \tasks each. For each clause $c=(\ell_1 \vee \ell_2 \vee \ell_3)$ in $F$, the precedence graph contains arcs $(\ell_1,c_1)$,$(\ell_2,c_2)$,$(\ell_3,c_3)$ and arcs from \task $d_c$ to the 8 \tasks of gadget $G_c$.
	
	It remains to prove that $F$ has an assignment satisfying all clauses if and only if the \ISG  admits a pure Nash equilibrium. For the forward direction,  suppose $F$ has an assignment $\alpha: \mbox{var}(F)\rightarrow \{0,1\}$ satisfying all clauses. Consider the schedule where, for each variable $x$, the player $P_x$ schedules first the literal of $x$ that is set to true by $\alpha$, i.e., $x$ is scheduled before $\neg x$ iff $\alpha(x)=1$. For each clause $c$, the player $P_c$ schedules its true literals, then its false literals given $\alpha$, and then \task $d_c$. \Tasks in gadget $G_c$ can be scheduled arbitrarily.
	This schedule is in a pure Nash equilibrium: for each variable player this is the best that player can do. For each clause player this is the best that player can do given that all dependencies from the \tasks of the variable players are met. Finally, the players in gadget $G_c$ are indifferent between all schedules because their \tasks all become active in the last time step, given that \task $d_c$ was scheduled at the end.
	
	For the reverse direction, suppose conversely that the game has a pure Nash equilibrium. Consider the assignment $\alpha: \mbox{var}(F)\rightarrow \{0,1\}$ with $\alpha(x)=1$ iff player $P_x$ schedules $x$ at the first time step.
We show that the assignment $\alpha$ satisfies $F$. Suppose some clause $c$ is not satisfied. Then, none of its literal \tasks will be activated before the second time step, only \task $d_c$ is activated in the first time step. Hence, all best responses for the clause player $P_c$ put \task $d_c$ into the first time slot, giving the player a reward of 36.
This means that \tasks in gadget $G_c$ have no restrictions imposed. But $G_c$ for itself does not admit a Nash equilibrium, and hence the entire game does not either, a contradiction. So all clauses are satisfied.
\end{proof}

\end{document}